\tikzstyle{sommet}=[circle,draw,fill=black,inner sep=2pt]
\tikzstyle{opt}=[draw=red,dashed]
\tikzstyle{alg}=[draw=blue]
\tikzstyle{algopt}=[double]
\tikzstyle{none}=[dotted]
\tikzstyle{Snake}=[decorate,decoration={snake}]
\newcommand{\etal}{\emph{et al}.\ }
\pgfplotsset{compat=1.10}
\newtheorem{theorem}{Theorem}
\newtheorem{lemma}{Lemma}
\newtheorem{proposition}[lemma]{Proposition}
\newtheorem{Remarks*}{Remarks}
\begin{document}

\title{Online Maximum Matching with Recourse\thanks{Supported by ANR OATA, DIM RFSI DACM and Labex Mathématique Hadamard. Preliminary version appeared in the Proceedings of the 43rd International Symposium on Mathematical Foundations of Computer Science (MFCS), 2018.}}

\author{Spyros Angelopoulos\thanks{Sorbonne Universit\'e, CNRS, Laboratoire d'informatique de Paris 6, LIP6, F-75252 Paris, France.} \and Christoph D\"urr\footnotemark[2] \and Shendan Jin\footnotemark[2]}

\maketitle

\begin{abstract}
We study the online maximum matching problem in a model in which the edges are
associated with a known recourse parameter $k$. An online algorithm for this
problem has to maintain a valid matching while edges of the underlying graph
are presented one after the other. At any moment the algorithm can decide to
include an edge into the matching or to exclude it, under the restriction that
at most $k$ such actions per edge take place, where $k$ is typically a small
constant. This problem was introduced and studied in the context of general
online packing problems with recourse by Avitabile \etal [Information Processing Letters, 2013], 
whereas the special case $k=2$ was studied by Boyar \etal [WADS 2017].

In the first part of this paper we consider the {\em edge arrival} model, in
which an arriving edge never disappears from the graph. Here, we first show an
improved analysis on the performance of the algorithm AMP of Avitabile {\em et al.}, by exploiting the structure of the
matching problem. In addition, we show that the greedy algorithm has
competitive ratio $3/2$ for every even $k$ and ratio $2$ for every odd $k$.
Moreover, we present and analyze an improvement of the greedy algorithm which
we call $L$-\textsc{Greedy}, and we show that for small values of $k$ it
outperforms the algorithm AMP. In terms of
lower bounds, we show that no deterministic algorithm better than $1+1/(k-1)$
exists, improving upon the known lower bound of $1+1/k$.

The second part of the paper is devoted to the \emph{edge arrival/departure
model}, which is the fully dynamic variant of online matching with recourse.
The analysis of $L$-\textsc{Greedy} and AMP carry through in this model;
moreover we show a lower bound of $(k^2-3k+6) / (k^2-4k+7)$ for all even $k
\ge 4$. For $k\in\{2,3\}$, the competitive ratio is $3/2$.
\end{abstract}

\paragraph{Keywords:} {Matching; online algorithms; competitive analysis; recourse}

\section{Introduction}
\label{sec:introduction}

In the standard framework of online computation, the input to the algorithm is
revealed incrementally, i.e., as a sequence of {\em requests}. For each such requested
input item, the online algorithm must make a decision that is typically {\em
irrevocable}, in the sense that the algorithm commits, in a permanent manner,
to the decision associated with the request. More precisely, the algorithm may
not alter any previously made decisions while considering later requests. This
rather stringent constraint is meant to capture what informally can be
described as ``the past cannot be undone''. Equally significantly, it is at
the heart of adversarial arguments that can be used to argue that its performance, measured by means of the
{\em competitive analysis} framework~\cite{textbook-borodin}, cannot be improved beyond a
certain bound.

Nevertheless, there are real-life applications in which some (limited)
rearrangement of the online solution during the execution of the algorithm may
be doable, or even requisite. For instance, online call admission protocols
may sporadically reconfigure the virtual paths assigned in the network. For a
different example, in online scheduling (or resource allocation) problems, it
may be permissible for a job to be transfered to a processor other than the
one specified by the original decision associated with the job.  Clearly, a
trade-off is to be found between the guaranteed competitive ratio and the cost
of re-optimizing the current solution.  Different approaches to this objective have been
considered.  One such approach has studied the minimum total re-optimization
cost required in order to maintain an optimal solution, see Bernstein \etal \cite{Bernstein:2018aa}.
Another approach has focused on the best achievable competitive ratio when there is some bound on the allowed
re-optimization, such as in Avitabile \etal
\cite{AvitabileMathieu:13:Online}, and it is the main model we consider in this work.

More specifically, we study the \emph{online maximum matching}
problem, in which the objective is to maintain a vertex-disjoint edge set of maximum
cardinality in a given graph. Here, the request sequence consists of the edges of the graph, and for each revealed edge, the 
online algorithm must decide whether to include it in the matching or not. In the standard model, each request has to be
immediately either accepted in the matching, or rejected, and this decision cannot be revoked in the future. We consider a generalization 
of this model, in which the algorithm can switch
between accepting and rejecting an edge that has already appeared, but is
allowed up to $k$ such modifications per edge, where $k$ is the {\em recourse parameter} that is known to the algorithm\footnote{For consistency with other recourse models, 
we define the initial default state of an edge as \emph{rejected}, and therefore rejecting a newly arriving edge does not count as decision modification. See also Section~\ref{subsec:related.work.models}.}. The following section provides the formal definition of the problem, as well as a motivating application.

\subsection{Problem definition and motivation}
\label{subsec:definition}

We define the {\em online matching problem with $k$ edge-recourse}, for given $k \in \mathbb N^*$. The request sequence is a permutation of the edge set 
$E$ in a graph $G=(V,E)$. The online algorithm knows the set $V$, as well as the parameter $k$, but not the set $E$. At each point in time, the online algorithm must maintain a matching
$M\subseteq E'$, where $E' \subseteq E$ is the set of currently revealed edges. Specifically, for each revealed edge $e$ in the sequence, the online algorithm either {\em accepts} $e$, by adding it in its
matching, or {\em rejects} it. In addition, the algorithm must obey the 
edge recourse constraint, which is defined as follows.  Every
edge has an integer type, which is set to $0$ upon the arrival of an edge. Whenever the
algorithm decides either to include an edge $e$ to its matching or to remove it,
the type of $e$ is increased.  The algorithm can perform these operations at any point in time
subject to the constraint that no edge type exceeds $k$.

The objective of the problem is to design an online algorithm of minimum {\em competitive ratio} which is defined as the worst-case ratio, over all
request sequences $\sigma$, of the cardinality of an optimal offline matching to the cardinality of the matching produced by the online algorithm.
See Section~\ref{subsec:competitive.analysis} for a more elaborate discussion on this measure. 

We make a distinction between two settings concerning the revealed edges. In the first setting, which we call the {\em edge arrival} setting, once an edge appears 
(as a request) it is guaranteed to be part of the input graph $G$. In other words edges may only arrive, but never depart. A more general setting is the one in which edges may not only arrive (in the form of a request), but may also disappear adversarially
(subsequently to their appearance). More specifically, at each request, the online algorithm is also informed about the subset of edges that are no longer part of the input graph since the last request.
We call this setting the {\em edge arrival/departure} setting. Note that in this model, a given edge $e$ may appear and disappear several times, and the number of such events is unrelated to the recourse parameter $k$.
This setting is motivated by similar models that have been studied in the context of the online Steiner tree
problem~\cite{GuptaKumar:14:Deletions}, and models the fully dynamic variant of the problem at hand. 

Last, in what concerns the edge arrival/departure setting, we make a further distinction 
concerning the edge departures. In the {\em full departure} model, the
adversary is allowed to delete any edge in the graph, and thus also any edge
that may have been provisionally accepted by the online algorithm. We show that this model
is quite restrictive, since it yields excessive power to the
adversary. We thus also study the {\em limited departure} model, in which the
adversary may delete only edges not currently accepted by the online
algorithm.

To motivate the problem and the various models, consider the following application related to resource allocation. 
Suppose that we have a set of tasks $T$ and a set of workers $W$, as well as a bipartite graph $G=(W,T,E)$.
An edge $(w,t)\in E$ between a worker $w$ and a task $t$ signifies that $w$ is qualified to work on $t$, and we assume that a task can be assigned only to one worker and, vice versa, a worker can only be occupied in one task. A maximum matching in $G$ describes a maximum assignment of tasks to workers.  In the standard online version of the problem
(with no recourse), the input consists of pairs of the form $(w,t)$, and once an algorithm assigns worker 
$w$ to task $t$ (namely, accepts the edge $(w,t)$) it cannot assign $w$ to a task other than $t$ or $t$ to a worker other than $w$.
In contrast, the online matching problem with $k$ edge-recourse captures the application in which 
$w$ may be assigned in an ``on/off'' manner to task $t$ up to $k$ times. Here $k$ bounds the willingness of workers to be reassigned to tasks.

Last, the edge arrival/departure model can capture the dynamic situation in which the compatibility of tasks and worker skills can change
over time. Under the full departure model, there are no constraints on such changes. In contrast, the
limited departure model stipulates that if worker $w$ remains assigned to task $t$ by the online algorithm, then $w$ does not lose his
qualification for $t$, in the sense that the worker has a continual occupation
with the said task and maintains the required skills for the task. However,
once the online algorithm decides to remove worker $w$ from task $t$ (i.e.,
the online algorithm provisionally rejects edge $(w,t)$), then the worker
might lose its qualification for the task over time.

\subsection{Competitive analysis}
\label{subsec:competitive.analysis}

Given an online algorithm ALG, and a request sequence $\sigma$, we use the standard
notation $\textrm{ALG}(\sigma)$ to denote the output of ALG on sequence $\sigma$.
In the context of the matching problem, we will denote by $|\textrm{ALG}(\sigma)|$
the {\em value} of the matching output by ALG on sequence $\sigma$, namely the 
cardinality of its matching. We will denote by OPT the offline optimal algorithm 
that has knowledge of the request sequence, and hence $|\textrm{OPT}(\sigma)|$
denotes the cardinality of the offline optimal matching. 

An online algorithm $\textrm{ALG}$ for the maximum cardinality matching problem is said to be
\emph{asymptotically $c$-competitive} if there is a constant $d$ such that
$|\textrm{ALG}(\sigma)| \geq |\textrm{OPT}(\sigma)| / c - d$, for all request
sequences $\sigma$. If $d=0$
the algorithm is called \emph{strictly $c$-competitive}.  Note that some
previous work on the matching problem has used the reciprocal ratio. The
smallest $c$ for which an online algorithm $\textrm{ALG}$ is $c$-competitive
is called the \emph{asymptotic competitive ratio} of $\textrm{ALG}$.  The
strict competitive ratio is defined similarly.  If it so happens that this
minimum value does not exist, the competitive ratio is actually defined by the
corresponding infimum. In this setting an \emph{upper bound} on the (strict or
not) competitive ratio establishes the performance guarantee of an online
algorithm, whereas a \emph{lower bound} is a negative result.

Both upper and lower bounds in this work are shown for the strict competitive
ratio. This implies that the upper bounds carry over to the more general
definition, but this generalization does not necessarily hold for the lower
bounds. We emphasize, however, that the known lower bounds for edge-bounded
recourse problems in~\cite{AvitabileMathieu:13:Online,BoyarFavrholdt:17:Relaxing-the-Irrevocability} 
are likewise expressed in terms of the strict
competitive ratio. This is due, perhaps, to difficulties in applying
techniques that extend the lower bounds to the standard definition of the
competitive ratio that are inherent to the recourse setting, and which do not
arise in the traditional online framework of irrevocable decisions.
Specifically, it is not obvious how to use techniques based on multiple copies
of an adversarial instance in order to lower-bound the performance of any
online alghorithm, although this may be possible for specific online
algorithms (see, e.g., Lemma~\ref{lem:lb_asym_LGREEDY}). For convenience, we
will henceforth refer to the strict competitive ratio as simply the
``competitive ratio''.

For convenience of notation, we will omit the request sequence $\sigma$ when it is implied
from context, or when it is not relevant. Thus, with a slight, but standard abuse of notation, 
we will denote by ALG both the algorithm and its output. 

\subsection{Related work}
\label{subsec:related.work}

Several online combinatorial optimization problems have been studied under recourse
settings. The broad objective is to quantify the trade-off between the
competitive ratio and a measure on the modifications allowed on the solution.
Some representative examples include online problems such as minimum spanning
trees and TSP~\cite{MegowSkutella:16:The-power}, Steiner
trees~\cite{GuptaKumar:14:Deletions,GuGupta:16:The-power}, knapsack
problems~\cite{Han:2009aa,IwamaTaketomi:02:Removable}, assignment
problems in bipartite unweighted graphs~\cite{GuptaKumar:14:Maintaining}, and
general packing problems~\cite{AvitabileMathieu:13:Online}. In the remainder
of this section we review work related to online maximum matching.

\subsubsection{Online matching, with and without recourse: models}
\label{subsec:related.work.models}

Concerning online matching, two different request models have been studied in the past. In the \emph{vertex arrival model}, vertices arrive in online
fashion, revealing, at the same time, the edges incident with previously arrived
vertices. This model has mainly been considered for bipartite graphs, with
left side vertices arriving online, and right side vertices being initially
known (see the survey~\cite{mehta2013online}).  In the \emph{edge arrival model} the edges arrive online in arbitrary order, revealing at the same time incident vertices\footnote{We emphasize
that in our work we consider the maximum cardinality matching problem; some
previous work (with or without recourse) has considered the generalized {\em
weighted matching} problem, in which each edge has a weight and the objective
is to maximize the weight of edges in the matching.}.

In the standard online model, every request (either vertex, or edge, depending on the request model) is served in an irrevocable manner. In contrast, for online matching with recourse and edge arrivals, several models have been proposed that relax the irrevocable nature of a decision.
In the {\em late reject} model
\cite{BoyarFavrholdt:17:Relaxing-the-Irrevocability}, which is also called the
{\em preemptive model} \cite{Chiplunkar:2015aa}, an edge can be accepted only upon its
arrival, but can be later rejected. The problem we study in this work, namely online matching with $k$ edge-recourse was 
introduced in~\cite{AvitabileMathieu:13:Online}.
Boyar \etal \cite{BoyarFavrholdt:17:Relaxing-the-Irrevocability}
refer to this model for $k=1$ as the {\em late accept} model, and for $k=2$ as
the {\em late accept/reject} model.
Clearly, the competitive ratio is monotone in $k$.  
Figure~\ref{fig:models} provides an illustration of the algorithm's actions under the different models.

\begin{figure}[ht]
\begin{tabular}{l}
    \begin{tikzpicture}
            \node (release) at (0,0) [draw,fill=black!20] {release};
            \node (dummy)   [right=of release] {};
            \node (reject0) at (2.5,0.5) [draw] {reject};
            \node (accept1) at (2.5,-0.5) [draw] {accept};
            \draw [->] (release) -- (reject0);
            \draw [->] (release) -- (accept1);
    \end{tikzpicture}
\\
Standard model
\\[1em]
 \begin{tikzpicture}
            \node (release) [draw,fill=black!20] {release};
            \node (accept0) [draw,right=of release] {accept};
            \node (reject1) [draw,right=of accept0] {reject};
            \draw [->] (release) -- (accept0);
            \draw [->] (accept0) -- (reject1);
            \draw [->] (release) to [bend left=25] (reject1);
    \end{tikzpicture}
\\
Late reject model, also called free disposal or preemptive model
\\[1em]
 \begin{tikzpicture}
            \node (release) [draw,fill=black!20] {release};
            \node (reject0) [draw,right=of release] {reject};
            \node (accept1) [draw,right=of reject0] {accept};
            \draw [->] (release) -- (reject0);
            \draw [->] (reject0) -- (accept1);
            \draw [->] (release) to [bend left=25] (accept1);
    \end{tikzpicture}
\\
Late accept model, also called  1 edge-recourse model
\\[1em]
    \begin{tikzpicture}
            \node (release) [draw,fill=black!20] {release};
            \node (reject0) [draw,right=of release] {reject};
            \node (accept1) [draw,right=of reject0] {accept};
            \node (reject2) [draw,right=of accept1] {reject};
            \draw [->] (release) -- (reject0);
            \draw [->] (reject0) -- (accept1);
            \draw [->] (accept1) -- (reject2);
            \draw [->] (release) to [bend left=25] (accept1);
    \end{tikzpicture}
\\
Late accept/reject model, also called edge 2 edge-recourse model
\\[1em]
    \begin{tikzpicture}
            \node (release) [draw,fill=black!20] {release};
            \node (reject0) [draw,right=of release] {reject};
            \node (accept1) [draw,right=of reject0] {accept};
            \node (reject2) [draw,right=of accept1] {reject};
            \node (accept3) [draw,right=of reject2] {accept};
            \node (reject4) [draw,right=of accept3] {reject};
            \node (a) [below=of accept1,yshift=8mm] {};
            \node (b) [below=of reject4,yshift=8mm] {};
            \draw [<->,dotted] (a) -- (b) node [midway] {$k$};
            \draw [->] (release) -- (reject0);
            \draw [->] (reject0) -- (accept1);
            \draw [->] (accept1) -- (reject2);
            \draw [->] (reject2) -- (accept3);
            \draw [->] (accept3) -- (reject4);
            \draw [->] (release) to [bend left=25] (accept1);
    \end{tikzpicture}
\\
$k$ edge-recourse model
\end{tabular}
\caption{Illustration of the actions of an online matching algorithm under the different edge-arrival models with recourse. }
\label{fig:models}
\end{figure}

\subsubsection{Online matching, with and without recourse: known results}

For online maximum matching without recourse, and for the vertex arrival model, 
the seminal work of Karp {\em et al.}~\cite{KarpVazirani:90:An-optimal} gave a randomized online algorithm with
competitive ratio $e/(e-1)$ in the vertex arrival model together with a matching
lower bound on any online algorithm. In contrast, for the edge arrival model 
and the randomized competitive ratio, \cite{BuchbinderSegev:17:Online} showed
a lower bound of $(3+1/\varphi^2)/2$ as well as an upper bound of $1.8$
for the special case of forests, where $\varphi$ is the golden ratio.

It is well known that any inclusion-wise maximal matching has cardinality
at least half of the optimal maximum cardinality matching.  From this it
follows that the greedy online algorithm, which accepts an edge as long as it can be added to the current matching,  
has competitive ratio at most $2$, which in the standard model is optimal among all deterministic online
algorithms.

\paragraph{Late reject} \
In the vertex arrival model, the greedy algorithm achieves
trivially the competitive ratio of $2$, which is optimal for all
deterministic online algorithms.
The situation differs in the edge arrival model.  Epstein {\em et
al.}~\cite{EpsteinLevin:13:Improved} showed that for online weighted matching, the deterministic competitive ratio is exactly $3+2\sqrt2 \approx
5.828$, as the upper bound of \cite{McGregor:05:Finding} matches the lower
bound of \cite{Varadaraja:11:Buyback}.  The same paper
\cite{EpsteinLevin:13:Improved} shows that the randomized competitive ratio is
between $1+\ln2\approx 1.693$ and $5.356$. Chiplunkar {\em et al.}~\cite{Chiplunkar:2015aa} presented a randomized $28/15$-competitive algorithm based on a primal-dual analysis.

\paragraph{$k$ edge-recourse} \
This model was introduced and studied by Avitabile {\em et
al.}~\cite{AvitabileMathieu:13:Online} for the edge arrival setting, in the
context of a much broader class of online packing problems. They gave an
algorithm, which we call AMP, that combines doubling techniques with optimal
solutions to offline instances of the problem, and which has  competitive ratio $1 + O \left( \frac{\log k}{k} \right)$
(see Section~\ref{sec:upper_bound} for an analysis of AMP). We note that this result is formulated in~\cite{AvitabileMathieu:13:Online} in a ``dual'' setting. More precisely,~\cite{AvitabileMathieu:13:Online} asks the question:
how big should the edge budget $k$ be such that there is a $(1+\varepsilon)$-competitive online
algorithm that makes at most $k$ changes per edge? They showed that
$k=O(\ln(1/\varepsilon)/\varepsilon)$ suffices.
On the negative side, they showed that no randomized algorithm can be better than
$1+1/(9k-1)$-competitive; we note also that their construction implies a
lower bound of $1+1/k$ for all deterministic algorithms.

Boyar \emph{et al.}~\cite{BoyarFavrholdt:17:Relaxing-the-Irrevocability}
showed that the deterministic competitive ratio is $2$, if
$k=1$, and $3/2$, if $k=2$, and this optimal competitive ratio is achieved by the greedy algorithm.
Moreover~\cite{BoyarFavrholdt:17:Relaxing-the-Irrevocability}
studied several other problems for a value of the recourse parameter equal to 2, such as
independent set, vertex cover and minimum spanning forest.

\paragraph{Minimizing recourse} \
Bernstein {\em et al.}~\cite{Bernstein:2018aa} studied a different recourse model
in which the algorithm has to maintain an optimal matching.
More specifically, recourse here is expressed in terms of 
the total number of times edges enter or leave the algorithm's matching. They
considered the setting of a bipartite graph and the vertex arrival model and
showed that a simple greedy algorithm achieves optimality using $O(n \log^2 n)$ replacements, where
$n$ is the number of nodes in the arriving bipartition, whereas the
corresponding lower bound for any replacement strategy is $\Omega(n \log n)$.

\subsection{Contribution of this work}
\label{subsec:contribution}
In the first part of this work, we study the online matching problem with edge $k$-bounded recourse under the edge arrival model. For this problem, we
provide improvements on both upper and lower bounds on the competitive ratio. First, we revisit the doubling algorithm of~\cite{AvitabileMathieu:13:Online} that was originally
analyzed in the general context of online packing problems. We give a better analysis, specifically for the problem at hand, that uses concepts and ideas related to the matching problem; we also show that the AMP algorithm has  competitive ratio $1+O(\frac{\log k}{k})$. On the negative side, we show that
no deterministic algorithm is better than $(1+1/(k-1))$-competitive, improving upon the known bound of $1+1/k$ of~\cite{AvitabileMathieu:13:Online}.

At first sight these improvements may seem marginal; however one should take
into consideration that $k$ is typically a small parameter, and thus the
improvements are by no means negligible. In this spirit, we propose and
analyze a variant of the greedy algorithm which we call $L$-Greedy. This
algorithm applies, at any step, augmenting paths as long as their length is at
most $2L+1$. We show that for a suitable choice of $L$, this algorithm is
$1+O(1/\sqrt{k})$-competitive. While this algorithm is thus not superior to
AMP for large $k$ (and more specifically, to its improved analysis in the
context of the matching problem), for small $k$ (and in particular, for $k
\leq 20$) it does achieve an improved competitive ratio, see
Figure~\ref{fig:compare_ratios}. Moreover, we extend a result of Boyar {\em et
al.}~\cite{BoyarFavrholdt:17:Relaxing-the-Irrevocability} that showed that the
greedy algorithm is $\frac{3}{2}$-competitive for $k=2$ to all even $k$ (for odd $k$,
the competitive ratio is 2).

\begin{figure}[ht]
\begin{center}
\begin{tikzpicture}
\begin{axis}[
  xtick=data,
  xlabel=edge budget k,
  ylabel=competitive ratio]
  \addplot[color=red,mark=*] coordinates {
        (4, 2.64526)
        (6, 2.03971)
        (8, 1.7763)
        (10, 1.62664)
        (12, 1.52919)
        (14, 1.46023)
        (16, 1.40862)
        (18, 1.3684)
        (20, 1.33609)
        (22, 1.3095)   };
\addlegendentry{original AMP analysis}
  \addplot[color=orange,mark=x] coordinates {
(4,2.59808)
(6,1.86919)
(8,1.6136)
(10,1.48058)
(12,1.39808)
(14,1.3415)
(16,1.30008)
(18,1.26833)
(20,1.24315)
(22,1.22264)};
\addlegendentry{improved AMP analysis}
  \addplot[color=blue,mark=o] coordinates {
        (4, 1.5)
        (6, 1.466)
        (8, 1.428)
        (10, 1.3333)
        (12, 1.318)
        (14, 1.307)
        (16, 1.3)
        (18, 1.247)
        (20, 1.2421)
        (22, 1.238)};
\addlegendentry{$L$-\textsc{Greedy}}
\end{axis}
\end{tikzpicture}
\end{center}
\caption{Comparison of the  competitive ratios of the algorithm AMP and the algorithm $L$-\textsc{Greedy}}
\label{fig:compare_ratios}
\end{figure}

In terms of techniques, we analyze both AMP and $L$-Greedy using amortization arguments in which the profit of the algorithms is expressed in terms
of weights appropriately distributed over nodes in the graph. We achieve these improvements by exploiting properties of augmenting paths in matching algorithms.

The second part of the paper is devoted to the \emph{edge arrival/departure
model}, which is the fully dynamic variant of the online matching problem.
First, we observe that the analysis of $L$-Greedy and AMP carries through in
this model as well. On the negative side, we show a lower bound of
$(k^2-3k+6)/(k^2-4k+7)$ for all even $k \ge 4$. For $k\in \{2,3\}$, the
competitive ratio is $3/2$.  We obtain these lower bounds by modeling the
game between the algorithm and the adversary as a game played over
\emph{strings} of numbers 0 up to $k$.  These strings represent alternating
paths, and each number represents how many times the algorithm has modified its
decision on the corresponding edge.  This
provides a simpler combinatorial aspect to the game played between the
adversary and the algorithm.

We note that, for the analysis of AMP and of L-\textsc{Greedy}, we assume that
$k$ is even. This assumption is borrowed from
\cite{AvitabileMathieu:13:Online} and is required for the analysis.  Of course
for odd $k\geq 3$ these algorithms can be run with budget $k-1$, providing
a valid upper bound on the  competitive ratio. Note that our lower bound
in the arrival model holds for all values of $k$.

\subsection{Preliminaries}


A \emph{matching} in a graph $G=(V,E)$ is a set of edges $M\subseteq E$ with
disjoint endpoints.  A vertex $v\in V$ is said to be \emph{matched} by $M$ if
there is an edge $e\in M$ incident to $v$, and is \emph{unmatched} otherwise.
A key concept in maximum matching algorithms is
the notion of an \emph{augmenting alternating path}, or simply
\emph{augmenting path}.  A path $P$ in $G$ is a sequence of vertices
$v_1,v_2,\dots,v_\ell$ for some length $\ell \geq 2$, such that
$(v_i,v_{i+1})\in E$ for all $i=1,\ldots,\ell-1$.  It is said to be
\emph{alternating with respect to $M$} if every other edge of $P$ belongs to
$M$.  Moreover, an alternating path is \emph{augmenting} if the first and the last
vertex in the path is unmatched by $M$. Applying $P$ to $M$ consists in removing from $M$
the edges in $M\cap P$ and adding the edges in $P\setminus M$.  The resulting
matching has cardinality $M+1$, and every previously matched vertex
remains matched.

We define some concepts that will be useful in the analysis of algorithms
throughout the paper. We will associate each edge with a {\em type} which is
an integer in $[0,k]$. An edge is of type $i$ if it has undergone $i$ decision
flips by the algorithm. Hence, for an edge of type $k$, where $k$ is the
recourse budget, its decision has been finalized, and cannot change further;
we call such an edge {\em blocked}. The type of a path $P$ is defined by the
sequence of the types of its edges, and to make this concept unambiguous, we
choose between the two orientations of the path the one that results in the
lexicographically minimal such sequence. Note that when the algorithm applies
some augmenting path $P$ to its current matching $M$, then the type of every
edge in $P$ is increased by $1$. Moreover, the two extreme edges of an
augmenting path are of type 0, because the endpoints of $P$ are unmatched. We
will call a path {\em blocked} if it contains a blocked edge.

Given a request sequence $\sigma$, let $G(\sigma)$ denote the graph induced by $\sigma$. 
For an online algorithm $ALG$, and a connected component $C$ of $G(\sigma)$, we define
the {\em local ratio} in $C$ as the ratio
between the cardinality of the optimum matching and the cardinality of $ALG's$ matching, 
restricted to edges in $C$.

\section{Online matching in the edge arrival model}
\label{sec:arrival}

\subsection{The algorithm AMP}
\label{sec:upper_bound}

We study the performance of an algorithm proposed by Avitabile \etal \cite{AvitabileMathieu:13:Online}
for the more general \emph{online set packing problem}. In this problem sets arrive online, and
the objective is to maintain a collection of disjoint sets that has maximum cardinality.
More specifically,~\cite{AvitabileMathieu:13:Online} proposed a
\emph{doubling algorithm} which is defined for even $k$ only. The algorithm has a
parameter $r>1$ and there is a decision variable for every set which can be
changed at most $k$ times. The algorithm works in phases, sequentially
numbered by an integer $p$.  Initially $p=0$, and the algorithm's current solution is
$\textrm{AMP}_{0}=\emptyset$. Let $\ell$ be the largest integer such that the
optimal solution has value at least $r^{\ell}$, where $\ell$ is defined to be $-\infty$ if
the optimal solution is empty. Whenever this value increases, the algorithm
starts a new phase. We define $\ell(p)$ as the value of $\ell$ during phase $p$, and thus
\begin{equation}
\ell(p) + i \leq \ell(p+i)
\label{eq:monotonicity.phase}
\end{equation}
for every positive integer $i$. At the
beginning of a new phase, all decision variables that have been changed fewer
than $k$ times are set as in OPT, resulting in the current solution
$\textrm{AMP}_{p}$ (note that the algorithm crucially depends on $k$ being
even in order to produce a feasible solution). In between phases, AMP does not make any changes to 
its current solution.

Avitabile \etal show that the value $|AMP|$ of the solution returned by the algorithm is at least
\[
\left(1-\left( \frac{r-1}{r} + \frac{r}{r^k(r-1)}\right) \right) |OPT|,
\]
which implies that
\[
\frac{|OPT|}{|AMP|} \leq \frac{1}{1-\left( \frac{r-1}{r} + \frac{r}{r^k(r-1)}\right)}
                                          = \frac{r^k(r-1)}{r^{k-1}(r-1)-r}.
\]
Thus, for a given $r > 1$, the competitive ratio of AMP is at most
\(
\frac{r^k(r-1)}{r^{k-1}(r-1)-r}.
\)
Let $r_0$ denote the root of $r^{k-1}(r-1)-r = 0$. If $r < r_0$, then for all $k \geq 4$, we have $r^{k-1}(r-1)-r < 0$ since this function is increasing in $r$. We conclude that the competitive ratio of AMP is upper bounded by
\begin{equation}
\label{eq:inf_AMP}
\min_{r>r_0} \frac{r^k(r-1)}{r^{k-1}(r-1)-r}.
\end{equation}

We will show how to obtain an improved analysis of this algorithm in the context of the matching problem.
Since we know optimal algorithms for $k=1,2$ \cite{BoyarFavrholdt:17:Relaxing-the-Irrevocability} and, since we can show optimality of the greedy algorithm for $k=3$ (see Section~\ref{sec:(k-1)/k}), we can assume, for the purposes of the analysis, that $k\geq 4$.
We begin by a restatement of the update phase that will help us exploit the structure of solutions obtained via augmenting paths.
More specifically, on every
edge arrival, the algorithm updates a current optimal solution OPT.  At the beginning of a new phase,
the algorithm  produces
a matching $\textrm{AMP}_p$ obtained from $\textrm{AMP}_{p-1}$ as follows:
every edge $e\in\textrm{AMP}_{p-1}\setminus \textrm{OPT}$ is removed from
the current matching, and every edge $e\in\textrm{OPT} \setminus
\textrm{AMP}_{p-1}$ which is of type strictly smaller than $k$ is added to the
current matching. Note that edges incident with $e$ have been removed, hence $\textrm{AMP}_p$ is indeed a matching.
Also note that all edges added or removed by the algorithm have their type increased by
one.

Since $\textrm{AMP}_{p-1}$ and $\textrm{OPT}$ are matchings, their symmetric
difference, excluding type $k$ edges, consists of alternating cycles and
alternating paths which can be of even or odd length.  This means that the
algorithm simply applies at the beginning of every phase all those alternating
paths and cycles.


Let $\textrm{OPT}_p$ denote the matching produced by OPT as phase $p$ is about to begin. 
From the statement of AMP, we obtain the following series of inequalities, for every phase $p$.
\begin{equation}  
\label{eq:amp_bound_opt}
r^{\ell(p)} \leq |\textrm{OPT}_p|  \leq |\textrm{OPT}| < r^{\ell(p)+1}.
\end{equation}

For any given phase $p\geq 1$, we aim to bound the ratio $|\textrm{OPT}|/|\textrm{AMP}_p|$, since 
this will allow us to bound the competitive ratio of AMP, for a worst-case choice of $p$. 
Note that the type of an edge increases by at most $1$ with each phase. 
Hence, in the beginning of the $k$ first phases AMP
``synchronizes'' with $\textrm{OPT}$ as there are no blocked edges yet, and as a
result during these phases the ratio $|\textrm{OPT}|/|\textrm{AMP}_p|$ does not
exceed $r$, from~\eqref{eq:amp_bound_opt}.

For the remaining phases we need the following argument.
\begin{proposition}
\label{lemma:inf_sol}
For even $k$ and any phase $p \geq k + 1 $, AMP maintains a matching $\textrm{AMP}_p$ of cardinality at least
$r^{\ell(p)} - r^{\ell(p-k+1)+1}$.
\end{proposition}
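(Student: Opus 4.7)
The plan is to exploit the evenness of $k$ to derive a clean structural identity $\textrm{ALG}_p = \textrm{OPT}_p \setminus B_p$ for an appropriate set $B_p$ of blocked OPT-edges, and then to bound $|B_p|$ by an elementary matching argument applied to the graph available at phase $p-k+1$.

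For the structural part, I would argue as follows. Because the initial state of every edge is \emph{rejected} and each decision-flip toggles the state, an edge that has been flipped $k$ times with $k$ even is back in the rejected state. Hence every type-$k$ edge is unmatched in $\textrm{ALG}_{p-1}$. The update rule at the start of phase $p$ keeps blocked edges as they are and sets every type-$<k$ edge to its status in $\textrm{OPT}_p$; since blocked edges occupy no endpoint of the current matching, this assignment never conflicts with any already accepted edge and yields the valid matching
\[
\textrm{ALG}_p = \textrm{OPT}_p \setminus B_p, \qquad B_p := \{\, e \in \textrm{OPT}_p : \text{type}(e) = k \text{ at the start of phase } p \,\}.
\]
Combining this identity with \eqref{eq:amp_bound_opt}, the required lower bound on $|\textrm{ALG}_p|$ reduces to proving $|B_p| \leq r^{\ell(p-k+1)+1}$.

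For the counting part, I rely on two observations about $B_p$. First, $B_p$ is a matching, being a subset of the matching $\textrm{OPT}_p$. Second, every edge $e \in B_p$ already exists in the graph at phase $p-k+1$: indeed, $e$ has undergone $k$ flips before phase $p$, and since each phase can flip a given edge at most once, these flips occupy $k$ distinct phases of $\{1,\ldots,p-1\}$, so the first flip must happen no later than phase $p-k$, at which time $e$ must already have been released. Consequently $B_p$ is a matching inside the graph present during phase $p-k+1$, and its cardinality is bounded by the maximum matching of that graph, i.e., $|\textrm{OPT}_{p-k+1}|$, which is strictly less than $r^{\ell(p-k+1)+1}$ by the definition of $\ell(\cdot)$.

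The main subtlety I expect is verifying the structural identity $\textrm{ALG}_p = \textrm{OPT}_p \setminus B_p$ --- in particular checking that the type-$(k-1)$ edges which must flip during phase $p$'s update become correctly the freshly blocked and rejected edges, so that they fall outside both $\textrm{ALG}_p$ and $B_p$. Once this identity is pinned down, the counting step above finishes the argument in essentially one line.
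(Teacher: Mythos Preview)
Your argument is correct and reaches the same intermediate inequality $\textrm{ALG}_p \geq \textrm{OPT}_p - \textrm{OPT}_{p-k+1}$ as the paper, but by a different and somewhat more direct route. The paper does not isolate the identity $\textrm{ALG}_p = \textrm{OPT}_p \setminus B_p$; instead it introduces \emph{vertex} types (the maximum type among incident edges), observes that a vertex type can increase by at most one per phase so that every type-$k$ vertex at phase $p$ already had positive type at phase $p-k+1$ and is therefore matched by $\textrm{OPT}_{p-k+1}$, and finally uses that each blocked augmenting path in $\textrm{OPT}_p \triangle \textrm{ALG}_p$ contributes at least two type-$k$ vertices, yielding $\textrm{OPT}_p - \textrm{ALG}_p \leq \tfrac{1}{2} n_{k,p} \leq \textrm{OPT}_{p-k+1}$. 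Your approach replaces the vertex-counting and the factor of two by the single observation that $B_p$ is itself a matching contained in the graph at the start of phase $p-k+1$; this is cleaner but relies on having the structural identity explicit. Regarding the subtlety you flag: it does not actually bite. An edge of type $k-1$ that flips during the phase-$p$ update necessarily lies in $\textrm{ALG}_{p-1}\setminus\textrm{OPT}_p$ (its type is odd, so it is in $\textrm{ALG}_{p-1}$), hence it is outside $\textrm{OPT}_p$ and outside $B_p$ by definition; conversely, an edge of $\textrm{OPT}_p$ with type $<k$ before the update has type $<k$ after it as well, so $B_p$ is unambiguous.
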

\begin{proof}
We denote by the \emph{type of a vertex} $v$ the maximum type of the edges incident with $v$ and  by $n_{i,p}$ the number of vertices of type $i$ in phase $p$.
With every phase change the type of a vertex increases at most by $1$.  Hence every vertex of type $k$ in phase $p$ had positive type in phase $p-k+1$. Thus
\[
n_{k,p} \le  \sum_{i=1}^{k} n_{i,p-k+1} \leq 2 \cdot |\textrm{OPT}_{p-k+1}|,
\]
where the last inequality uses the fact that the left hand side counts the number of vertices matched by the algorithm.
In phase $p$, the difference between the cardinality of the optimal matching and the cardinality of AMP's matching 
is at most the number of blocked augmenting paths, and each of them contains at least two type $k$ vertices. Hence,
\begin{align*}
|AMP_{p}| &\geq |\textrm{OPT}_{p}| - \frac12 \cdot n_{k,p}
\\
&\geq |\textrm{OPT}_{p}| - |\textrm{OPT}_{p-k+1}|
\\
            & > r^{\ell(p)} - r^{\ell(p - k +1) + 1},
\end{align*}
where the last inequality follows from~\eqref{eq:amp_bound_opt}.
\end{proof}

Combining Proposition~\ref{lemma:inf_sol} with the bounds \eqref{eq:amp_bound_opt} we obtain the following.

\begin{proposition}
\label{lemma:amp_min_r}
The competitive ratio of AMP for $k\geq 4$ is upper bounded by the expression
\begin{equation}
\label{eq:inf_AMP_improved}
\min_{r>1} \frac{r^k}{r^{k-1}-r}.
\end{equation}
\end{proposition}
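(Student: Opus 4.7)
The plan is to combine the lower bound on $\textrm{ALG}_p$ from Proposition~\ref{lemma:inf_sol} with the upper bound on $\textrm{OPT}$ from \eqref{eq:amp_bound_opt}, phase by phase, and then minimize over $r$. For any phase $p\geq k+1$, Proposition~\ref{lemma:inf_sol} gives $\textrm{ALG}_p \geq r^{\ell(p)} - r^{\ell(p-k+1)+1}$, while \eqref{eq:amp_bound_opt} gives $\textrm{OPT} < r^{\ell(p)+1}$. The monotonicity relation $\ell(p)+i\leq \ell(p+i)$, applied with $i=k-1$ and phase index shifted to $p-k+1$, yields $\ell(p-k+1)\leq \ell(p)-k+1$, and consequently $r^{\ell(p-k+1)+1}\leq r^{\ell(p)-k+2}$. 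Substituting into the denominator, the ratio becomes
\[
\frac{\textrm{OPT}}{\textrm{ALG}_p} \;\leq\; \frac{r^{\ell(p)+1}}{r^{\ell(p)}-r^{\ell(p)-k+2}}.
\]
Factoring $r^{\ell(p)-k+1}$ from both numerator and denominator cancels the $p$-dependence and leaves precisely $r^k/(r^{k-1}-r)$.

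For the first $k$ phases, the discussion preceding Proposition~\ref{lemma:inf_sol} already observes that the ratio is at most $r$, since during those phases no edge has yet been blocked and the algorithm fully synchronizes with $\textrm{OPT}$ at the start of each phase. It remains only to verify that $r \leq r^k/(r^{k-1}-r)$, which after clearing denominators is equivalent to $r^{k-1}-r\leq r^{k-1}$, i.e.\ $r\geq 0$; this holds trivially. Hence the bound $r^k/(r^{k-1}-r)$ applies uniformly to all phases.

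Finally, I would take the infimum of this upper bound over the range of $r>1$ for which the denominator is positive. Since $r^{k-1}-r = r(r^{k-2}-1)>0$ for every $r>1$ whenever $k\geq 3$, and the hypothesis is $k\geq 4$, the admissible range is simply $r>1$, and we recover the claimed expression \eqref{eq:inf_AMP_improved}.

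I do not anticipate a serious obstacle: the argument is algebraic bookkeeping once Proposition~\ref{lemma:inf_sol} is in hand. The only subtle points are using the index inequality $\ell(p-k+1)\leq \ell(p)-k+1$ in the correct direction (so as to \emph{lower}-bound the denominator), and checking that the uniform bound absorbs the trivial early-phase case. The combinatorial substance of the proof is concentrated in Proposition~\ref{lemma:inf_sol}, which is deferred to the appendix.
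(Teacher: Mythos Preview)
Your argument is correct and follows essentially the same route as the paper's proof: combine Proposition~\ref{lemma:inf_sol} with \eqref{eq:amp_bound_opt}, use $\ell(p-k+1)\leq \ell(p)-k+1$ to simplify the denominator, reduce to $r^k/(r^{k-1}-r)$, and then observe this dominates the early-phase bound $r$. The only cosmetic difference is that the paper factors out $r^{\ell(p)}$ rather than $r^{\ell(p)-k+1}$, arriving at $r/(1-r^{-k+2})$ as an intermediate step before reaching the same final expression.
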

\begin{proof}
    Consider an arbitrary phase $p$ and a fixed parameter $r>1$.  The expression $r^k/(r^{k-1}-r)$ is at least $r$. As observed earlier, at the end of the first $k$ phases, the competitive ratio is at most $r$, hence the proof holds for $p\leq k$.  For the remaining case $p \geq k+1$, 
    we have that
    \begin{align*}
          \frac{|\textrm{OPT}|}{|\textrm{AMP}_p|}
        \leq & \frac{ r^{\ell(p)+1}} { r^{\ell(p)} - r^{\ell(p - k +1) + 1} }
        \\
        \leq & \frac{ r^{\ell(p)+1}} { r^{\ell(p)} - r^{\ell(p) - k +2 } }
        \tag{From~\eqref{eq:monotonicity.phase}}
        \\
        = & \frac{r}{1 - r^{-k+2}}
        \\
        = & \frac{r^k}{r^{k-1} - r}.
    \end{align*}
\end{proof}

Next, we show that Proposition~\ref{lemma:amp_min_r} can yield an improved analysis of AMP over the original bound~\eqref{eq:inf_AMP}.

\begin{proposition}
\label{prop:improvement}
For all even $k\geq 4$, we have that the competitive ratio as expressed by \eqref{eq:inf_AMP} is at least the expression \eqref{eq:inf_AMP_improved}.
\end{proposition}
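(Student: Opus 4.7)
The plan is to introduce $\phi(r) = \frac{r^k(r-1)}{r^{k-1}(r-1)-r}$ and $\psi(r) = \frac{r^k}{r^{k-1}-r}$, namely the integrands in \eqref{eq:inf_AMP} and \eqref{eq:inf_AMP_improved}, and to exploit the identity
\[
\phi(r) - \psi(r) = \frac{r^{k+1}(2-r)}{[r^{k-1}(r-1)-r]\,[r^{k-1}-r]},
\]
which is obtained by combining the two fractions over a common denominator and simplifying. For $r \in (r_0,\infty)$ both factors in the denominator are positive, so the sign of $\phi - \psi$ is the sign of $2 - r$; in particular $\phi(r) \ge \psi(r)$ on $(r_0,2]$, with equality at $r=2$.

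Next I would locate the (unique) minimizer of $\phi$. A short differentiation reduces $\phi'(r) = 0$ to the algebraic equation $(r-1)^2 r^{k-2} = k(r-1) + 1$. Let $h(r)$ denote the difference of the two sides. Then $h'(r) = r^{k-3}(r-1)(kr-(k-2)) - k$ is a product of three strictly increasing nonnegative factors on $(1,\infty)$ minus the constant $k$, hence is itself strictly increasing and changes sign exactly once. Combined with $h(1) = -1$ and $h(r) \to +\infty$, this shows that $h$ is first decreasing and then increasing on $(1,\infty)$, with a unique root $r^* > 1$ that is the unique minimizer of $\phi$.

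For even $k \ge 6$, the elementary estimate $2^{k-2} \ge k+1$ is exactly $h(2) \ge 0$, and the shape of $h$ established above then forces $r^* \le 2$. Applying the first paragraph at $r = r^*$ yields $\min \phi = \phi(r^*) \ge \psi(r^*) \ge \min \psi$, which settles all even $k \ge 6$. The main obstacle is the remaining case $k = 4$, where $h(2) = -1$ and so $r^* > 2$, placing the minimizer outside the region of pointwise dominance. Here one has $\min \psi = \psi(\sqrt{3}) = 3\sqrt{3}/2$, while $\min \phi = r^*(4r^*-3)/(3r^*-2)$ with $r^*$ the unique real root of $r^4 - 2r^3 + r^2 - 4r + 3 = 0$ in $(2,\infty)$. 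Since the rational function $r \mapsto r(4r-3)/(3r-2)$ is strictly increasing on $(2/3,\infty)$ (its derivative has numerator $12r^2 - 16r + 6$ with negative discriminant), the inequality $\min \phi \ge 3\sqrt{3}/2$ is equivalent to $8(r^*)^2 - (6 + 9\sqrt{3})\,r^* + 6\sqrt{3} \ge 0$, i.e., to a lower bound $r^* \ge \rho_+$ on the quartic's root, where $\rho_+$ is the larger root of that quadratic. Because the quartic itself is monotone past its turning point (which lies in $(1,2)$), this reduces to verifying that the quartic evaluates to a nonpositive number at $\rho_+$, which is a routine algebraic check and the only substantive computation not handled by the clean argument for $k \ge 6$.
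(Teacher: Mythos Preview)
Your argument is essentially the paper's: both establish the pointwise inequality $\phi(r)\ge\psi(r)$ on $(r_0,2]$ (the paper via the intermediate fraction $\frac{r^k(r-1)}{r^{k-1}(r-1)-r(r-1)}$ using $r-1\le 1$, you via direct subtraction), then show for even $k\ge 6$ that the minimizer of $\phi$ lies below~$2$ (the paper by checking $\phi'>0$ on $[2,\infty)$ from $r^{k-3}\ge k$, you via the shape of the critical equation $h$), and finally handle $k=4$ by a separate computation. The only notable difference is that for $k=4$ the paper simply compares both minima numerically against the threshold $2.6$, whereas you set up a cleaner algebraic reduction to a sign check of the quartic at $\rho_+$, which you likewise leave unperformed.
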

\begin{proof}
For $k=4$, we obtain numerically that \eqref{eq:inf_AMP} is 
2.64526
whereas~\eqref{eq:inf_AMP_improved} is 2.59808.

For even $k\geq 6$, first we show that the minimizer (for $r>1$) of
\[
\frac{r^k(r-1)}{r^{k-1}(r-1)-r}
\]
is between $r_0$ and $2$.  The derivative of the above expression is
\[
    \frac{(r-1)^2 r^{2 k}+(k-1-kr) r^{k+2}}{\left(r^2-(r-1) r^k\right)^2},
\]
which we claim to be positive for any $r\geq 2$. This follows by the inequality $r^{k-3} \geq k$ which holds for any $r \geq 2$ and $k \geq 6$.

Similarly, for even $k\geq 6$, we show that the minimizer (for $r>1$) of
\[
\frac{r^k}{r^{k-1}-r}
\]
 is between $1$ and $2$. The derivative in $r$ of the above expression is
\[
\frac{r^k \left(r^k-(k-1) r^2\right)}{\left(r^2-r^k\right)^2},
\]
which we claim to be positive for any $r\geq 2$. Again this follows by the inequality $r^{k-2} \geq k-1$, which holds for any $r \geq 2$ and $k \geq 6$.
The proof follows since for $1<r\leq 2$, we have
\[
 \frac{r^k(r-1)}{r^{k-1}(r-1)-r} \geq  \frac{r^k(r-1)}{r^{k-1}(r-1)-r(r-1)} =\frac{r^k}{r^{k-1}-r}.
\]
\end{proof}

The following theorem concludes the asymptotic analysis of the performance of AMP.
\begin{theorem}
For all even $k$, AMP has  competitive ratio $1 + O(\frac{\log k}{k})$.
\label{thm:asymptotic.amp}
\end{theorem}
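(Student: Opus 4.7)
The plan is to start from Proposition \ref{lemma:amp_min_r}, which already reduces the claim to showing that
\[
\min_{r>1} \frac{r^k}{r^{k-1}-r} \;=\; \min_{r>1} \, r \cdot \frac{r^{k-2}}{r^{k-2}-1} \;=\; \min_{r>1} \left( r \cdot \Bigl( 1 + \frac{1}{r^{k-2}-1} \Bigr) \right)
\]
is $1 + O(\log k / k)$. So I no longer need to reason about the algorithm itself; it is an elementary calculus exercise to exhibit a choice of $r$ that achieves this bound.

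The natural choice is $r = 1 + (\ln k)/k$. With this choice, I would first show an upper bound $r - 1 = (\ln k)/k$, which controls the first factor. For the second factor, I would use the standard estimate $\ln(1+x) \geq x - x^2/2$ valid for $x \in [0,1]$, applied to $x = (\ln k)/k$, to obtain
\[
\ln r^{k-2} \;\geq\; (k-2)\left( \frac{\ln k}{k} - \frac{(\ln k)^2}{2k^2} \right) \;=\; \ln k - O\!\left( \frac{\log^2 k}{k} \right).
\]
Exponentiating, this gives $r^{k-2} \geq k \cdot e^{-O(\log^2 k / k)} \geq k/2$ for all sufficiently large $k$, and so $1/(r^{k-2}-1) \leq 4/k$.

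Combining the two estimates yields
\[
r \cdot \Bigl(1 + \frac{1}{r^{k-2}-1}\Bigr) \;\leq\; \Bigl(1 + \frac{\ln k}{k}\Bigr)\Bigl(1 + \frac{4}{k}\Bigr) \;=\; 1 + O\!\left( \frac{\log k}{k} \right),
\]
which is the required asymptotic bound on \eqref{eq:inf_AMP_improved}. The small values of $k$ not covered by the asymptotic estimate are handled by Proposition \ref{lemma:amp_min_r}, which already requires only $k\geq 4$, together with the fact that the claim is vacuous up to the constant hidden in the $O(\cdot)$.

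There is no serious obstacle: the ratio $r \cdot (1 + 1/(r^{k-2}-1))$ is a sum of two terms, one increasing in $r-1$ and one decreasing in $r^{k-2}$, and the choice $r = 1 + (\ln k)/k$ balances them so that both are $O(\log k/k)$. The only mild subtlety is picking the constant in front of $(\ln k)/k$ large enough so that $r^{k-2} \to \infty$ (and in particular dominates $1$) — any constant $\geq 1$ suffices, as the computation above shows.
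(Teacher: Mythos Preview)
Your proof is correct. Both you and the paper start from Proposition~\ref{lemma:amp_min_r} and arrive at the bound by recognizing that the relevant regime is $r = 1 + \Theta((\log k)/k)$. The paper's approach first identifies the exact minimizer $r = (k-1)^{1/(k-2)}$ (via the derivative) and then either invokes a Puiseux series expansion or, in the elementary version, writes $r = 1 + f/k$ and uses an intermediate-value-theorem argument on $g(x) = x^{k-2} - k + 1$ to conclude $f = O(\log k)$. You instead bypass the minimizer entirely: since only an upper bound on the minimum is needed, you substitute the explicit test point $r = 1 + (\ln k)/k$ and verify directly via the inequality $\ln(1+x) \geq x - x^2/2$ that both factors in $r\cdot\bigl(1 + 1/(r^{k-2}-1)\bigr)$ contribute $O((\log k)/k)$. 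Your route is marginally more elementary and self-contained; the paper's route has the small advantage of pinning down the optimal $r$ exactly, which matters if one cares about the leading constant.
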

\begin{proof}
We first sketch a simple argument based on the Puiseux series expansion~\cite{Siegel:complex.function}: this is a type of power series that allows fractional powers, as opposed to only integer ones (e.g., Taylor series). 
Let $r$ denote the optimal choice of the parameter, namely the one that minimizes~\eqref{eq:inf_AMP_improved}.
By analyzing the derivative, it follows that $r=(k-1)^{1/(k-2)}$, hence the  competitive ratio is at most
$
\frac{(k-1)^\frac{k-1}{k-2}}{k-2},
$
whose Puiseux series expansion at $k=\infty$ is
$
1+\frac{\log k+1}{k}+O(\frac{1}{k^2}).
$

For completeness, we give a second proof that relies only on standard calculus. Let $f$ be such that $r=1+f/k$, then since
$r=(k-1)^{1/(k-2)}$, we have that the competitive ratio is at most
\begin{equation}
r \cdot \frac{k-1}{k-2} = \left(1+\frac{f}{k} \right) \frac{k-1}{k-2} = 1+O\left(\frac{f}{k}+\frac{1}{k} \right).
\label{eq:competitive.ratio.amp}
\end{equation}
Suffices then to show that $f=O(\log k)$.
Consider the function
\[
    g(x) = x^{k-2}-k+1,
\]
and note that $r$ must be a root of $g$. We can rewrite $g(r)$ as
\[
    g(r) = \left(1+\frac{f}{k} \right)^{k-2}-k+1 = e^{(k-2)\ln(1+f/k)}-k+1.
\]
Hence, we have
\begin{equation*}
\label{eq:cr2}
g(r) \ge e^{(k-2)2f/(2k+f)} - k + 1.
\end{equation*}
Here we used the following logarithmic inequality~\cite{love:inequalities} for $n\geq 0$:
\[
\ln(1+1/n) \ge \frac{2}{2n+1}.
\]
Suppose now that $f = 4\ln k$. In this case, we claim that $(k-2)\frac{2f}{2k+f} \ge \frac{f}{2}$, or equivalently $2k \ge 8+f$,
which holds for sufficiently large $k$. Thus we have
\[
g\left(1+ \frac{4\ln k}{k} \right) \ge e^{f/2} -k + 1 = k^2 - k + 1 > 0.
\]
On the other hand,
\(
g(0) = -k + 1 < 0.
\)
As a result, since $g$ is continuous, there exists $f \in [0, 4\ln k]$ such that $g(r) = 0$. Therefore, $f = O(\log k)$,
which concludes the proof.
\end{proof}

\subsection{The algorithm \textsc{Greedy}}
\label{sec:greedy}

We consider the algorithm \textsc{Greedy}, which repeatedly applies an
arbitrary augmenting path whenever possible. More precisely, let $E$ denote 
the set of edges that have been revealed to the algorithm, and let $e$ denote
a newly arriving edge. Then as long as there is a non-blocked augmenting path 
in the graph $(V,E\cup\{e\})$, where $V$ is the vertex set, 
\textsc{Greedy} will apply such a path until, 
eventually, no such path any longer exists. Note, in particular, that \textsc{Greedy}
does nothing if no such path exists upon the arrival of $e$.

This algorithm achieves an upper bound of $3/2$ for $k=2$, as shown in \cite{BoyarFavrholdt:17:Relaxing-the-Irrevocability}. 
We show that the same guarantee holds for all even $k$. 
In what concerns the
lower bound, the idea is to force the algorithm to augment an arbitrarily long
path in order to create a configuration with an arbitrarily large number of
blocked augmenting paths of lengths 5, which have local ratio $3/2$.  

\begin{proposition} \label{prop:greedy}
The  competitive ratio of \textsc{Greedy} is $3/2$ if $k$ is even, 
and $2$, if $k$ is odd.
\end{proposition}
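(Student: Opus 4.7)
The plan is to establish the four bounds (upper and lower for each parity of $k$) separately, based on a single monotonicity invariant of \textsc{Greedy}: since the algorithm only modifies its matching by applying augmenting paths, which preserve the matched status of internal vertices and add matches at the two endpoints, every vertex that is matched by \textsc{Greedy} at some time remains matched forever. Combined with the observation that an edge of type $k$ must have been in the matching at some past moment when $k$ is even (and currently lies in the matching when $k$ is odd), this invariant constrains where blocked edges can appear at termination.

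For the upper bound with odd $k$, every non-matching edge has even type at most $k-1 < k$ and hence cannot be blocked; in particular any length-$1$ augmenting path (an edge joining two unmatched vertices) is immediately applied, so \textsc{Greedy}'s terminal matching $M$ is maximal and therefore $|M^*| \le 2|M|$. For the upper bound with even $k$, the plan is to rule out augmenting paths of length $1$ and $3$ in $M \oplus M^*$ at termination. A length-$1$ augmenting path would be a blocked edge in $M^* \setminus M$ whose endpoints are unmatched by $M$, but a type-$k$ edge (for even $k$) was once in the matching and thus once matched both its endpoints, contradicting the invariant. For length-$3$ paths, the middle edge lies in $M \setminus M^*$ with odd type $<k$ and hence is not blocked, while the two outer edges would, if blocked, again contradict that their unmatched endpoints were never matched. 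A standard Berge-type decomposition of $M \oplus M^*$ into paths and cycles then yields $|M^*|/|M| \le 3/2$ whenever the shortest augmenting path has length at least $5$.

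For the lower bound with even $k$, the construction follows the hint in the proposition statement: the adversary presents a sufficiently long path and forces \textsc{Greedy} to apply augmenting paths along it repeatedly, thereby driving the type of specific non-matching edges up to $k$; the terminal configuration contains many length-$5$ augmenting paths in $M \oplus M^*$ whose middle (non-matching) edge is blocked, and each such path contributes a local $\textrm{OPT}/|M|$ gap of $3/2$. For the lower bound with odd $k$, the base case $k=1$ is the familiar $4$-vertex path where the middle edge is presented first, which gets blocked in the matching and through which every remaining augmenting path must pass. For general odd $k \ge 3$, the plan is to nest this gadget: first force \textsc{Greedy} through a sequence of augmenting paths so that a targeted edge cycles through $k$ flips and becomes blocked \emph{in the matching}, and then attach additional edges so that every potential augmenting path is forced to traverse the now-blocked matching edge.

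The main obstacle is the lower bound for odd $k \ge 3$: while for even $k$ the blocked edges sit conveniently as non-matching middle edges of short augmenting paths (which directly produces the $3/2$ gap), for odd $k$ the blocked edges lie inside \textsc{Greedy}'s matching and one must arrange the sequence so that the final matching size remains roughly half of $\textrm{OPT}$; in particular, each of the $k$ flips of the critical edge coincides with \textsc{Greedy} increasing its matching by one, so the construction must compensate by attaching further $\textrm{OPT}$-only edges elsewhere that \textsc{Greedy} can no longer reach.
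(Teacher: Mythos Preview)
Your upper-bound arguments (maximality of $M$ for odd $k$; ruling out length-$1$ and length-$3$ augmenting components for even $k$ via the once-matched-always-matched invariant and the parity of types) are correct and coincide with the paper's proof. Your even-$k$ lower-bound sketch likewise matches the paper's long-path construction.

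The plan for the odd-$k$ lower bound, however, does not reach ratio~$2$. If you drive a \emph{single} critical edge $e$ through $k$ flips, each flip corresponds to \textsc{Greedy} applying one augmenting path and hence increasing $|M|$ by~$1$; by the time $e$ is blocked you already have $|M| \ge (k+1)/2$. Attaching pendant edges at the two endpoints of $e$ then creates exactly one blocked augmenting path, so $|M^*| \le |M|+1$ and the resulting ratio is at most $1 + 2/(k+1)$, which tends to~$1$. Your proposed compensation---``attaching further OPT-only edges elsewhere that \textsc{Greedy} can no longer reach''---cannot work: any new edge incident to an $M$-unmatched vertex either forms a short unblocked augmenting path that \textsc{Greedy} immediately applies, or lies on an alternating walk that must traverse the single blocked edge $e$, and in a path-like topology this still contributes at most one additional optimal edge.

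The fix is simpler than you anticipate and reuses your even-$k$ construction verbatim. The long-path instance already produces $\Theta(n)$ edges of type~$k$ in its middle segment, regardless of the parity of $k$. For odd $k$ these type-$k$ edges lie \emph{in} $M$, so attaching a fresh pendant edge at each endpoint of \emph{every} type-$k$ edge yields $\Theta(n)$ blocked augmenting paths of length~$3$ (type pattern $0,k,0$), driving the ratio to~$2$. Thus the odd-$k$ case is not the main obstacle but the easier one: length-$3$ blocked paths are more damaging than the length-$5$ ones you build for even~$k$.
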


\begin{proof}
First note that
\textsc{Greedy} has the property that every edge in the optimal matching has
at least one endpoint matched by \textsc{Greedy}. As a result, the
competitive ratio is at most 2 for any $k$, and in particular for any odd $k$.

For even positive $k$ we give a stronger upper bound of $3/2$. Consider the
symmetric difference of the matching produced by \textsc{Greedy} and an
optimal matching which consists of alternating cycles and paths.  In each of
these components the local ratio is $1$ for the alternating cycles and
alternating paths of even length. We claim that alternating paths of odd
length have length $\ell$ at least 5, and therefore the local ratio is at most
$3/2$. To see this, observe that the case $\ell=1$ corresponds to an edge with
both endpoints unmatched, and \textsc{Greedy} would have included it in its
matching. In the case $\ell=3$, the center edge has an odd budget, meaning
that \textsc{Greedy} would have applied this augmenting path.

The proof of the lower bound for even positive $k$ consists of an instance on
which \textsc{Greedy} achieves the competitive ratio $3/2 - \epsilon$ for any small
constant $\epsilon>0$; see Figure~\ref{fig:lb_greedy}. Let $n$ be a
sufficiently large integer.  First the adversary releases $2n+1$ vertex
disjoint edges, which the algorithm includes in its matching. Then these edges
are connected with $2n+2$ new edges to form an augmenting path of length
$4n+3$. From now on, each time the algorithm applies this augmenting path in
its matching, the path is extended with one additional edge on each end, until
there is at least one edge of type $k$ on the path.  At this point the edge
types on the path form a sequence which starts with types from 1
to $k-1$, then alternates between types $k$ and $k-1$ and finally ends with
types from $k-1$ to $1$. To complete the instance, the adversary attaches
a new edge on each endpoint of every second type $k$ edge.  As a result
there are $n$ augmenting paths of length $5$, which are all blocked by a type
$k$ edge, together with 2 alternating paths of length $k$. The size of the
matching produced by \textsc{Greedy} is $2n+k$, whereas the optimal matching
has size $3n+k$, showing a lower bound of $3/2$.

For odd $k$ the construction can be further strengthened. In the
final step, the adversary attaches an edge on each endpoint of every type $k$ edge, creating
an arbitrary large number of blocked augmenting paths of length $3$.

Note that the lower bounds hold even for the asymptotic competitive ratio, by repeating these
constructions as in the proof of Lemma \ref{lem:lb_asym_LGREEDY}.
\end{proof}

\begin{figure}[ht]
 \centering
 \begin{tikzpicture}[scale=.77,very thick]
    \node[sommet] (v0) at (0,0) {};
    \node[sommet] (v1) at (1,0) {};
    \node[sommet] (v2) at (2,0) {};
    \node[sommet] (v3) at (3,0) {};
    \node[sommet] (v4) at (4,0) {};
    \node[sommet] (v5) at (5,0) {};
    \node[sommet] (v6) at (6,0) {};
    \node[sommet] (v7) at (7,0) {};
    \node[sommet] (v8) at (8,0) {};
    \node[sommet] (v9) at (9,0) {};
    \node[sommet] (v10) at (10,0) {};
    \node[sommet] (v11) at (11,0) {};
    \node[sommet] (v12) at (12,0) {};
    \node[sommet] (v13) at (13,0) {};
    \node[sommet] (v14) at (14,0) {};
    \node[sommet] (v15) at (15,0) {};
    \node[sommet] (v16) at (3,-1) {};
    \node[sommet] (v17) at (4,-1) {};
    \node[sommet] (v18) at (7,-1) {};
    \node[sommet] (v19) at (8,-1) {};
    \node[sommet] (v20) at (11,-1) {};
    \node[sommet] (v21) at (12,-1) {};
    \draw[alg] (v0) -- node[auto] {1} (v1);
    \draw[alg] (v2) -- node[auto] {3} (v3);
    \draw[alg] (v4) -- node[auto] {3} (v5);
    \draw[alg] (v6) -- node[auto] {3} (v7);
    \draw[alg] (v8) -- node[auto] {3} (v9);
    \draw[alg] (v10) -- node[auto] {3} (v11);
    \draw[alg] (v12) -- node[auto] {3} (v13);
    \draw[alg] (v14) -- node[auto] {1} (v15);
    \draw[opt] (v1) -- node[auto] {2} (v2);
    \draw[opt] (v3) -- node[auto] {0} (v16);
    \draw[opt] (v4) -- node[auto] {0} (v17);
    \draw[opt] (v5) -- node[auto] {4} (v6);
    \draw[opt] (v7) -- node[auto] {0} (v18);
    \draw[opt] (v8) -- node[auto] {0} (v19);
    \draw[opt] (v9) -- node[auto] {4} (v10);
    \draw[opt] (v11) -- node[auto] {0} (v20);
    \draw[opt] (v12) -- node[auto] {0} (v21);
    \draw[opt] (v13) -- node[auto] {2} (v14);
    \draw[none] (v3) -- node[auto] {4} (v4);
    \draw[none] (v7) -- node[auto] {4} (v8);
    \draw[none] (v11) -- node[auto] {4} (v12);
\end{tikzpicture}
 \caption{Lower bound construction on the competitive ratio of \textsc{Greedy} for $k = 4$ and $n = 2$. Edges are labeled by their type.  Solid/blue edges depict the algorithm's matching, dashed/red edges depict the optimal matching, dotted/black edges belong to none of the matchings.}
 \label{fig:lb_greedy}
 \end{figure}

\subsection{The algorithm $L$-\textsc{Greedy}}
\label{sec:L-greedy}

How can the greedy algorithm be improved?  As illustrated in the proof of
Proposition~\ref{prop:greedy}, the greedy algorithm has inferior performance because it augments
along arbitrarily long augmenting paths, therefore sometimes sacrificing edge
budget for only a marginal increase in the matching size.  A natural idea
towards an improvement would be to apply only short augmenting paths, as they
are more budget efficient.  For technical reasons, we restrict the choice of
augmenting paths even further.

We define the algorithm $L$-\textsc{Greedy} for some given
parameter $L$, which applies any non-blocked augmenting path of length at most
$2L+1$ that is in the symmetric difference between the current matching and
some particular optimal matching OPT.  The latter is updated after each edge
arrival by applying an augmenting path to OPT. 

To make the above more precise, let $E$ denote 
the set of edges that have been revealed to the algorithm, and let $e$ denote
a newly arriving edge. First, we explain how the optimal matching is updated: 
If $\textrm{OPT}(E)$ denotes the optimal matching after all edges in $E$ have 
been revealed, then  $\textrm{OPT}(E\cup \{e\})$ is obtained by applying an 
augmenting path to $\textrm{OPT}(E)$. Then, $L$-\textsc{Greedy} serves request
$e$ by consecutively applying non-blocked augmenting path of length at most
$2L+1$  that is in the symmetric difference between its current matching (prior to the arrival of $e$)
and $\textrm{OPT}(E)$.

Note that $L$-\textsc{Greedy} may not change its solution even if there is a
short augmenting path in the current graph if it contains edges
which are not in this particular optimal matching OPT.
We will later optimize the parameter $L$ as a function of $k$.

\subsubsection{Analysis of $L$-\textsc{Greedy}}

We begin by observing that for $L=0$ the algorithm collects greedily vertex
disjoint edges without any recourse, which is precisely the behavior of
\textsc{Greedy} for $k=1$ and has  competitive ratio $2$. For $L=1$ the
algorithm $L$-\textsc{Greedy} applies only augmenting paths of length at most
3. In this case,
the same argument as
in the proof of Proposition~\ref{prop:greedy} shows that the
competitive ratio of $L$-\textsc{Greedy} is $3/2$.

In what follows we analyze the general case $L\geq 2$. To this end, we assign
weights to vertices in a way that the total vertex weight equals the size
of the current matching.  Therefore, whenever the size of the matching is
increased by $1$, a total weight of $1$ is distributed on the vertices along
the augmenting path as follows. First, vertices in this path that were already matched receive
a weight $\alpha$, where $\alpha \geq 0$ is some constant that we specify
later.  Second, the two vertices on the endpoints of the augmenting path
receive the remaining weight, that is $1/2 - \ell \alpha$, where $2\ell+1$ is
the length of the path. It follows, from this weight assignment, that every
unmatched vertex has weight $0$, that every matched vertex has  weight at
least $1/2-L\alpha$, and that  every endpoint of a type $k$ edge has weight at
least $1/2-L\alpha + (k-1)\alpha$.

Suppose that $L$-\textsc{Greedy} reaches a configuration in which it cannot
apply any augmenting path, as specified in its statement. We consider the
symmetric difference between the matching produced by the algorithm and the
optimal matching maintained internally by the algorithm. This symmetric
difference consists of alternating paths and/or alternating cycles, and we
will upper bound for each such component its local ratio. In particular, a
component in the symmetric difference falls in one of the following cases:
Either it is an augmenting path of length $2\ell+1 \leq 2L+1$, or an
augmenting path of length  $2\ell+1 >2L+1$, or an alternating cycle or
alternating path of even length.

\bigskip
\noindent
{\em Case 1: Augmenting path of length $2\ell+1\leq 2L+1$.} \
Note that such a path contains at least one edge of type $k$, otherwise the algorithm 
would augment it. It follows that $\ell\geq
2$, since an augmenting path of length $1$ is a single type 0 edge, and an
augmenting path of length $3$ has edge types respectively $0,t,0$ for some odd
$t$ (and $k$ is assumed to be even).  The path contains $2\ell$ matched
vertices, and at least $2$ of them are incident with a type $k$ edge.  Hence the
total vertex weight is at least
$
    2\ell\left(\frac12 - L \alpha\right) + 2(k-1)\alpha,
$
and the local ratio of this component is at most
\begin{equation}\label{expr:short}
 \frac{\ell+1}{\ell - 2\ell L \alpha + 2(k-1)\alpha}.
\end{equation}

\bigskip
\noindent
{\em Case 2: Augmenting path of length $2\ell+1> 2L+1$.} \
Such a path contains $2\ell$ matched vertices and therefore the local ratio is at most
\begin{equation}\label{expr:long}
    \frac{\ell+1}{\ell - 2\ell L \alpha}.
\end{equation}

\bigskip
\noindent
{\em Case 3: Alternating cycle or path of even length.} \
Such a component contains $2\ell$ matched vertices and therefore the local ratio is at most
\begin{equation*}\label{expr:cycle}
    \frac{\ell}{\ell - 2\ell L \alpha}, 
\end{equation*}
\medskip
which is dominated by \eqref{expr:long}.
We obtain the following performance guarantee.

\begin{theorem}
The  competitive ratio of $L$-\textsc{Greedy} with $L = \lfloor \sqrt{k-1} \rfloor$ is at most
\[
\frac{k(L +2) - 2}{(L +1)(k-1)}
= 1+O\left( \frac{1}{\sqrt{k}} \right),
\]
for even $k\geq 6$ and at most $3/2$ for $k=4$.
\label{thm:upper.l_greedy}
\end{theorem}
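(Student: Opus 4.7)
The plan is to execute the $\alpha$-$L$ optimization set up in the case analysis preceding the theorem. Since the cycle/even-path bound~\eqref{expr:cycle} is strictly dominated by the long-path bound~\eqref{expr:long}, it suffices to balance the short-path expression~\eqref{expr:short} against~\eqref{expr:long}. Expression~\eqref{expr:long} is decreasing in $\ell$, hence maximized at the smallest admissible value $\ell = L+1$, giving $(L+2)/\bigl((L+1)(1-2L\alpha)\bigr)$; while \eqref{expr:short}, viewed as a function of $\ell$, is monotone with direction depending on the sign of $2(k-1)\alpha - (1-2L\alpha)$, so its worst case over $\ell \in \{2,\ldots,L\}$ is attained at one of the endpoints.

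I would then set $\alpha$ so as to equalize the long-path value at $\ell = L+1$ with the short-path value at $\ell = L$, the endpoint corresponding to the longest admissible short augmenting path. At $\ell = L$, \eqref{expr:short} becomes $(L+1)/(L + 2\alpha(k-1-L^2))$, and equating with the long-path expression and cross-multiplying, the algebra collapses (via the identity $L + (L+2)(k-1) = k(L+2)-2$) to the linear equation $1 = 2\alpha\bigl(k(L+2)-2\bigr)$. This yields $\alpha = 1/\bigl(2(k(L+2)-2)\bigr)$, and a direct substitution verifies that both sides evaluate precisely to the claimed target $(k(L+2)-2)/\bigl((L+1)(k-1)\bigr)$. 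The main technical step is then to check that for this $\alpha$ the short-path ratio also stays below the target for every intermediate $\ell \in \{2,\ldots,L-1\}$; algebraically this reduces to an inequality of the form $(\ell+1)(L+1) \leq \ell(L+2)+1$, and I expect this bookkeeping to be the bulk of the work deferred to Appendix~\ref{app:l_greedy}, possibly requiring finer structural observations about blocked short paths (for example, forced type-$k$ edges beyond the single one already exploited in~\eqref{expr:short}).

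Finally, I would optimize $L$. Treating $L$ as a continuous parameter and differentiating the target $(k(L+2)-2)/\bigl((L+1)(k-1)\bigr)$ in $L$, the minimum is attained near $L = \sqrt{k-1}$, motivating the integer choice $L = \lfloor\sqrt{k-1}\rfloor$. Substituting $L^2 \approx k-1$ into the target simplifies it to approximately $(L+1)/L = 1 + 1/\sqrt{k-1}$, establishing the asymptotic bound $1 + O(1/\sqrt{k})$. For $k = 4$ the prescribed formula gives $L = \lfloor\sqrt{3}\rfloor = 1$, which lies in the base regime treated at the start of Section~\ref{sec:L-greedy}, where $L$-\textsc{Greedy} only ever augments along length-$3$ paths and the analysis used for Proposition~\ref{prop:greedy} yields the $3/2$ bound directly.
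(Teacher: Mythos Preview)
Your plan coincides with the paper's central computation (its Subcase~2b): equalize~\eqref{expr:short} at $\ell=L$ with~\eqref{expr:long} at $\ell=L+1$, solve for $\alpha = 1/\bigl(2(k(L+2)-2)\bigr)$, and read off the target ratio. The $L$-optimization via $L=\lfloor\sqrt{k-1}\rfloor$ and the $k=4$ remark are likewise the same as in the paper.

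The gap you flag, however, is real and is not mere bookkeeping. Your own reduction already exposes it: $(\ell+1)(L+1)\le \ell(L+2)+1$ is equivalent to $\ell\ge L$, so it \emph{fails} for every $2\le\ell\le L-1$. Equivalently, your $\alpha$ satisfies $\alpha<1/\bigl(2(L+k-1)\bigr)$, which places~\eqref{expr:short} in its \emph{decreasing}-in-$\ell$ regime; the worst short path is then $\ell=2$, and there the local ratio strictly exceeds the target (for instance $k=10$, $L=3$, $\alpha=1/96$ gives $16/11>4/3$). The paper does not close this via the ``finer structural observations'' on blocked paths that you anticipate. Instead it performs a four-way case split on the sign of $2\alpha(L+k-1)-1$ and on the signs of $k-2L-1$ and $k-L^2-1$; in the $\alpha$-range where your equalization point actually lands (the paper's Case~1b), the balancing is done at $\ell=2$ rather than $\ell=L$, yielding a different $\alpha$ and the bound $(L^2+kL+2k-2L-2)/\bigl((k-1)(L+1)\bigr)$, still $1+O(1/\sqrt{k})$. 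The paper then compares its several case bounds and selects the formula~\eqref{expr:2b} as the smallest. So your calculation reaches the paper's stated answer, but the point you correctly isolate---why the $\ell=L$ equalization should govern rather than the $\ell=2$ one---is exactly where the argument is delicate, and it is not resolved by extra structural input about type-$k$ edges; it hinges on the case analysis over $\alpha$, and one should check carefully that the Subcase~2b value of $\alpha$ indeed falls in the Case~2 range.
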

\begin{proof}
We choose $\alpha$ so as to minimize the maximum of the local ratios, as defined by~\eqref{expr:short} and \eqref{expr:long}. Then for this choice of $\alpha$ we optimize $L$ as stated in the theorem.
Note that for $k=4$, this leads to the choice $L=1$, which we analyzed in the beginning of the section.

For $k\geq 6$, we have to minimize over $\alpha$ and $L$ the maximum over $\ell$ of the two
ratios given by~\eqref{expr:short} and~\eqref{expr:long}.
First we upper bound~\eqref{expr:long} as
\begin{equation}\label{expr:long2}
    \frac{\ell+1}{\ell - 2\ell L \alpha} \leq
    \frac{L+2}{L+1 - 2L^2 \alpha - 2L \alpha},
\end{equation}
where the inequality follows from $\ell\geq L+1$ and the fact that the
left hand side is decreasing in $\ell$ as can be seen by dividing
both the numerator and denominator by $\ell$.

In order to upper bound \eqref{expr:short}, we find its derivative in $\ell$ which is equal to
\[
    \frac{2 \alpha (L+k-1) - 1}{(\ell+2\alpha(k-L\ell-1))^2}.
\]
This means that (\ref{expr:short}) is increasing or decreasing in $\ell$ depending on the sign of $2 \alpha (L+k-1)-1$. Hence we distinguish two cases.

\paragraph{Case 1: $\alpha < 1/(2(L+k-1))$}

In this case (\ref{expr:short}) is decreasing in $\ell$, and by $\ell\geq 2$ is at most
\begin{equation}\label{expr:short2}
 \frac{3}{2\alpha (k-2L-1) + 2}.
\end{equation}

\paragraph{Subcase 1a: $k-2L - 1 < 0$}

In this case (\ref{expr:short2}) and (\ref{expr:long2}) are increasing in
$\alpha$ and hence minimized at $\alpha=0$. For this choice of $\alpha$ (\ref{expr:short2})
is $3/2$, while (\ref{expr:long2}) is $(L+2)/(L+1)$ which by $L\geq 2$ is
less than $3/2$.  In conclusion, the competitive ratio in Subcase 1a is at most $3/2$.

\paragraph{Subcase 1b: $k-2L-1 \geq 0$}

In this case (\ref{expr:short2}) is non-increasing in $\alpha$ while (\ref{expr:long2}) is increasing in $\alpha$.  Hence the maximum of (\ref{expr:short2}) and (\ref{expr:long2}) is minimized at the equality of the expressions, which happens for
\[
\alpha = \frac {L-1}{2L^2 + 4k + 2kL -4L -4}.
\]
It can readily be verified that this choice of $\alpha$ indeed satisfies the assumption of Case~1 for all even $k\geq 2$ and $L\geq 2$.
The corresponding ratio is
\begin{equation}               \tag{R1b} \label{expr:1b}
\frac{L^2 + 2k + kL -2L -2}{(k-1)(L+1)}.
\end{equation}

\paragraph{Case 2: $\alpha \ge 1/(2(L+k-1))$}
In this case (\ref{expr:short}) is increasing in $\ell$, and at $\ell=L$ becomes
\begin{equation}\label{expr:short3}
\frac{L+1}{L+2 \alpha(k-L^2 - 1)}.
\end{equation}

\paragraph{Subcase 2a: $k-L^2 -1 < 0$}  In this case, \eqref{expr:short3} and \eqref{expr:long2} are increasing in $\alpha$.  At $\alpha=1/(2(L+k-1))$, the former becomes
\begin{align}
\frac{L+1}{L+\frac{k-L^2-1}{L+k-1}}
= & \frac{(L+1)(L+k-1)}{L(L+k-1)+k-L^2-1}  \notag
\\
= & \frac{(L+1)(L+k-1)}{(L+1)(k-1)}        \notag
\\
= & \frac{L+k-1}{k-1}.                     \label{expr:2aa}
\end{align}
Similarly, \eqref{expr:long2} becomes
\begin{align}
\frac{L+2}{L+1-\frac{L(L+1)}{L+k-1}}
= & \frac{(L+2)(L+k-1)}{(L+1)(L+k-1)-L(L+1)}    \notag
\\
= & \frac{(L+2)(L+k-1)}{(L+1)(k-1)},      \tag{R2a} \label{expr:2a}
\end{align}
which dominates \eqref{expr:2aa} and therefore upper bounds the competitive ratio in Subcase 2a.

\paragraph{Subcase 2b: $k -L^2 - 1 \geq 0$} In this case, \eqref{expr:short3} is non-increasing in $\alpha$ while \eqref{expr:long2} is increasing in $\alpha$.  We have equality of the expressions for
\[
\alpha = \frac{1}{2k(L+2) - 4},
\]
which implies that the competitive ratio in Subcase 2b is at most
\begin{equation}                                    \tag{R2b}    \label{expr:2b}
\frac{k(L+2) - 2}{(L+1)(k-1)}.
\end{equation}
This concludes the case analysis. Thus it remains to optimize $L$.

For $k=4$, the assumptions of the Subcases 1b and 2b are not valid.  Hence, we have to choose $L$ so to minimize the minimum of $3/2$ and \eqref{expr:2a}, which is larger than $3/2$. This means that any choice $L\geq 2$ leads to a competitive ratio at most $3/2$. Note that the same guarantee is obtained for $L=1$.

Finally we consider $k\geq 6$ and choose $L$ so to minimize the minimum of $3/2$ and the ratios~\eqref{expr:1b}, \eqref{expr:2a}, \eqref{expr:2b}.
We claim that \eqref{expr:1b} is dominated by $3/2$. Indeed its derivative in $k$ is
\[
-\frac{L(L-1)}{(k-1)^2 (L+1)} < 0.
\]
Hence \eqref{expr:1b} is maximized for $k,L$ such that $k-2L-1 = 0$, and this maximum value is precisely $3/2$.

By comparing the numerators of the ratios~\eqref{expr:1b}, \eqref{expr:2a} and \eqref{expr:2b} the minimum ratio is \eqref{expr:2b}. Its derivative in $L$ is
\[
-\frac{k-2}{(L+1)^2(k-1)} \leq 0.
\]
Hence the minimum of \eqref{expr:2b} is attained at the upper bound for $L$ given by the Subcase 2b assumption, namely
\[
       L = \lfloor \sqrt{k-1} \rfloor.
\]
It follows that the competitive ratio is at most
\[
\frac{k(\lfloor \sqrt{k-1} \rfloor+2) - 2}{(\lfloor \sqrt{k-1} \rfloor+1)(k-1)} \leq 1 + \frac{2k + \sqrt{k-1} -1}{(\sqrt{k-1} +1)(k-1)}
= 1 + O\left(\frac1{\sqrt{k}}\right).
\]
\end{proof}

We complete the analysis of $L$-\textsc{Greedy} by showing that the upper bound is essentially tight.

\begin{lemma}                   \label{lem:lb_LGREEDY}
For even $k\geq 4$, the strict competitive ratio of $L$-\textsc{Greedy} with $L=\lfloor \sqrt{k-1}\rfloor$ is at least
\[
        1 + \Omega\left(\frac{1}{\sqrt{k}}\right).
\]
\end{lemma}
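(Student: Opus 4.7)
The plan is to exhibit a single arrival sequence on which $L$-\textsc{Greedy} ends up with a matching of size $L+1$ while the optimal matching of the final graph has size $L+2$. This yields a strict competitive ratio of at least
\[
\frac{L+2}{L+1} \;=\; 1+\frac{1}{L+1} \;\geq\; 1+\frac{1}{\lfloor\sqrt{k-1}\rfloor+1} \;=\; 1+\Omega\!\left(\frac{1}{\sqrt{k}}\right),
\]
which matches the upper bound of Theorem~\ref{thm:upper.l_greedy} to leading order in $1/\sqrt{k}$.

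The instance is built on the path $P$ with $2L+4$ vertices $v_0, v_1, \ldots, v_{2L+3}$, and its edges are presented in two phases. In the first phase the $L+1$ edges $e_j = (v_{2j-1}, v_{2j})$, for $j = 1, \ldots, L+1$, arrive in succession; at the moment of each arrival the new edge is an isolated augmenting path of length $1$, so $L$-\textsc{Greedy} is forced to include it, yielding a matching $M = \{e_1, \ldots, e_{L+1}\}$ of size $L+1$ that saturates the vertices $v_1, \ldots, v_{2L+2}$. In the second phase the $L+2$ ``bridging'' edges $(v_{2j}, v_{2j+1})$ for $j = 0, \ldots, L+1$ arrive in any order, eventually completing the path $P$.

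I would then argue that $L$-\textsc{Greedy} does not modify $M$ at any point during the second phase. Throughout this phase the vertices $v_1, \ldots, v_{2L+2}$ remain matched by $M$, so the only vertices in the current graph that are ever unmatched are $v_0$ and $v_{2L+3}$, each from the moment it is introduced. Any augmenting path with respect to $M$ must have both endpoints unmatched, so it must run from $v_0$ to $v_{2L+3}$; since the current graph is a subgraph of $P$, the unique such path (present only once both endpoints have arrived) is $P$ itself, of length $2L+3 > 2L+1$. Hence no augmenting path of length at most $2L+1$ ever exists in the graph, let alone in the symmetric difference $M \triangle \textrm{OPT}$ for any optimal matching $\textrm{OPT}$ the algorithm maintains internally; consequently $L$-\textsc{Greedy} makes no augmentation and terminates with $|M| = L+1$. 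Since the maximum matching of $P$ is the set of $L+2$ bridging edges, the claimed ratio follows. The main technical point to verify is precisely this non-existence of short augmenting paths in $M \triangle \textrm{OPT}$, and it is immediate from the path structure of the underlying graph.
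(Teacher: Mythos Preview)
Your construction is correct and is a genuinely different (and simpler) argument than the paper's. The paper builds an elaborate instance in which the adversary forces $L$-\textsc{Greedy} to repeatedly augment a growing path until some edges reach type $k$; it then attaches pendant edges to create several \emph{blocked} short augmenting paths, obtaining the ratio $\frac{3\lfloor (L-1)/2\rfloor + k-2}{L+k-3}$, which is $1+\Theta(1/\sqrt{k})$ for $L=\lfloor\sqrt{k-1}\rfloor$. In other words, the paper's lower bound exploits the \emph{recourse budget}: it shows that even short augmenting paths can become unusable because of type-$k$ edges.

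Your argument instead ignores the budget entirely and exploits only the \emph{length cutoff} $2L+1$: you build a single path of length $2L+3$ whose unique augmenting path is too long for $L$-\textsc{Greedy} to touch, regardless of how large $k$ is. This yields the ratio $(L+2)/(L+1)$, which is not only $1+\Omega(1/\sqrt{k})$ but actually has a better leading constant (roughly $1/\sqrt{k}$ versus roughly $1/(2\sqrt{k})$) and works uniformly for all $k\geq 4$, including $k\in\{4,6,8\}$ where the paper's construction (stated for $L\geq 3$) does not directly apply. The one thing the paper's construction buys that yours does not is a demonstration that the recourse constraint itself is binding for $L$-\textsc{Greedy}; your instance would give the same bound even for $k=\infty$. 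For the purpose of proving the stated lemma, however, your approach is both sufficient and more elementary.
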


\begin{proof}
For the proof we show that for any positive parameter $L\geq 3$ the (strict) competitive ratio of $L$-\textsc{Greedy} is at least
\begin{equation}
\label{eq:lb_LGREEDY}
\frac{3 \lfloor \frac{L-1}{2} \rfloor + k-2}{L + k-3}.
\end{equation}
The statement follows then by lower bounding this expression for $L=\lfloor \sqrt{k-1} \rfloor$.

We will show how to create an instance in which $L$-\textsc{Greedy} has the
competitive ratio expressed by \eqref{eq:lb_LGREEDY}. The adversarial
construction is depicted in Figure~\ref{fig:lb_LGREEDY}, and consists of the
following steps. First the adversary releases $L-2$ vertex disjoint edges,
denoted by $e_1, \ldots, e_{L-2}$, which the algorithm includes in its
matching. Then these edges are connected with $L-1$ new edges (which are denoted by
$f_1, f_2, \ldots ,f_{L-1}$ in Figure~\ref{fig:lb_LGREEDY}) so as to form an
augmenting path $P$ of length $2L-3$, which is short enough to ensure that the
algorithm applies it. Note that the precise order in which the edges arrive is not
important. At this stage the edges of $P$ have alternating types
$1$ and $2$, and by the choice of $L\geq 3$ there is at least one edge of type
$2$ in $P$.

In the next phase, the adversary releases edges from the sets $A$ and $B$, in
an interleaved manner; specifically, $P$ is extended with two additional edges
on each end, so as to form augmenting paths of length $2L-1$ and $2L+1$, until
there are edges of type $k$ on the path. At this point, the edge types on the
medium part of length $2L-3$ form an alternating sequence of $k-1$ and $k$,
and note that there are $k/2-1$ edges of type $2$ and $k/2-1$ edges of type
$1$ in each set $A$ and $B$. To complete the instance, the adversary attaches
a new edge on each of the type $k-1$ edges in $P$, alternating between the
left endpoint and on the right endpoint of $P$. As a result, all augmenting
paths are blocked with a type $k$ edge. The size of the matching produced by
$L$-\textsc{Greedy} is $L+ k - 3$, whereas the optimal matching has size at
least $3 \lfloor \frac{L-1}{2} \rfloor + k-2$, concluding the proof.
\end{proof}

\begin{figure}[ht]
 \centering
 \begin{tikzpicture}[very thick]
    \draw[fill=gray!20,rounded corners,draw=white] (-2,3) rectangle (0,-1.5);
    \draw[fill=gray!20,draw=black] (-1,2.7) node {A};
    \draw[fill=gray!20,rounded corners,draw=white] (9,3) rectangle (11,-1.5);
    \draw[fill=gray!20,draw=black] (10,2.7) node {B};

    \node[sommet] (v0) at (0,0) {};
    \node[sommet] (v1) at (1,0) {};
    \node[sommet] (v2) at (2,0) {};
    \node[sommet] (v3) at (3,0) {};
    \node[sommet] (v4) at (4,0) {};
    \node[sommet] (v5) at (5,0) {};
    \node[sommet] (v6) at (6,0) {};
    \node[sommet] (v7) at (7,0) {};
    \node[sommet] (v8) at (8,0) {};
    \node[sommet] (v9) at (9,0) {};


    \node[sommet] (v10) at (-2,2) {};
    \node[sommet] (v11) at (-1,2) {};
    \node[sommet] (v12) at (-2,1) {};
    \node[sommet] (v13) at (-1,1) {};
    \node[sommet] (v14) at (-2,-1) {};
    \node[sommet] (v15) at (-1,-1) {};

    \draw (-1.5,0) node[above] {$\vdots$};

    \node[sommet] (v18) at (10,2) {};
    \node[sommet] (v19) at (11,2) {};
    \node[sommet] (v20) at (10,1) {};
    \node[sommet] (v21) at (11,1) {};
    \node[sommet] (v22) at (10,-1) {};
    \node[sommet] (v23) at (11,-1) {};

 \draw (10.5,0) node[above] {$\vdots$};

    \node[sommet] (v26) at (0.1,-1) {};
    \node[sommet] (v27) at (3,-1) {};
    \node[sommet] (v29) at (8,-1) {};
    \node[sommet] (v30) at (4,-1) {};
    \node[sommet] (v31) at (7,-1) {};

    \draw[alg] (v0) -- node[auto] {k-1} (v1);
    \draw[alg] (v2) -- node[auto] {k-1} (v3);
    \draw[alg] (v6) -- node[auto] {k-1} (v7);
    \draw[alg] (v8) -- node[auto] {k-1} (v9);
    \draw[alg] (v4) -- node[auto] {k-1} (v5);

    \foreach \a/\b in {v10/v11, v12/v13, v14/v15, v18/v19, v20/v21, v22/v23} {
        \draw[alg] (\a) -- node[auto] {1} (\b);
        \draw[opt,transform canvas={yshift=-2pt}] (\a)  -- (\b);
    };



    \draw[opt] (v1) -- node[auto] {k} (v2);
    \draw[opt] (v1) -- node[below] {$e_1$} (v2);
    \draw[opt] (v0) -- node[right,yshift=-1ex] {0} (v26);
    \draw[opt] (v3) -- node[auto,yshift=-0.5ex] {0} (v27);
    \draw[opt] (v8) -- node[left,yshift=-0.5ex] {0} (v29);
    \draw[opt] (v4) -- node[left,yshift=-0.5ex] {0} (v30);
    \draw[opt] (v7) -- node[right,yshift=-0.5ex] {0} (v31);
    \draw[opt] (v5) -- node[auto] {k} (v6);
    \draw[opt] (v5) -- node[below] {$e_3$} (v6);

    \draw[alg] (v0) -- node[below] {$f_1$} (v1);
    \draw[alg] (v2) -- node[below] {$f_2$} (v3);
    \draw[alg] (v4) -- node[below] {$f_3$} (v5);
    \draw[alg] (v6) -- node[below] {$f_4$} (v7);
    \draw[alg] (v8) -- node[below] {$f_{L-1}$} (v9);

    \draw[none] (v7) -- node[auto] {k} (v8);
    \draw[none] (v7) -- node[below] {$e_{L-2}$} (v8);

    \draw[none] (v3) -- node[auto] {k} (v4);
    \draw[none] (v3) -- node[below] {$e_2$} (v4);

    \draw[none] (v0) -- node[midway,yshift=3ex] {2} (v11);
    \draw[none] (v0) -- node[midway,yshift=-2ex] {2} (v13);
    \draw[none] (v0) -- node[midway,yshift=-2ex] {2} (v15);
    \draw[none] (v9) -- node[midway,yshift=3ex] {2} (v18);
    \draw[none] (v9) -- node[midway,yshift=-2ex] {2} (v20);
    \draw[none] (v9) -- node[midway,yshift=-2ex] {2} (v22);

    \draw[<->](-2.6,2) -- (-2.6,-1) node[midway,fill=white] {k/2-1};
    \draw[<->](11.6,2) -- (11.6,-1) node[midway,fill=white] {k/2-1};

    \draw[<->](0,1) -- (9,1) node[midway,fill=white] {2L-3};

\end{tikzpicture}
 \caption{Lower bound construction for even $k$ and $L = 6$. Numbers on edges describe the edge types at the end of the
 algorithm's execution. Solid/blue edges depict the algorithm's matching, dashed/red edges depict the optimal matching and dotted/black edges belong to none of the matchings.}
 \label{fig:lb_LGREEDY}
 \end{figure}

The previous lemma can be extended to the asymptotic competitive ratio, using a standard technique based on multiple copies of the adversarial instance.
\begin{lemma}                \label{lem:lb_asym_LGREEDY}
For even $k$ and $L=\lfloor \sqrt{k-1} \rfloor \geq 3$, the asymptotic competitive ratio of $L$-\textsc{Greedy} is at least the expression~\eqref{eq:lb_LGREEDY} and hence is $1+\Omega(1/\sqrt k)$.
\end{lemma}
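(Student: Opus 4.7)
The plan is to lift the strict bound from Lemma~\ref{lem:lb_LGREEDY} to the asymptotic one by the standard disjoint-copies technique. Let $\sigma$ denote the adversarial instance constructed in the proof of Lemma~\ref{lem:lb_LGREEDY}, which is built on a finite vertex set and forces $L$-\textsc{Greedy} to end with a matching of size $L+k-3$ against an optimal matching of size at least $3\lfloor (L-1)/2\rfloor + k-2$. For an arbitrary positive integer $N$, define the request sequence $\sigma_N$ to be the concatenation of $N$ copies $\sigma^{(1)},\ldots,\sigma^{(N)}$ of $\sigma$, each copy played on a fresh, pairwise disjoint vertex set.

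First I would verify that $L$-\textsc{Greedy} behaves on $\sigma_N$ exactly as it does on each copy in isolation. Since the vertex sets of the copies are disjoint, the graph at every moment is a vertex-disjoint union of subgraphs, one per copy. The internal optimal matching that $L$-\textsc{Greedy} maintains decomposes as the union of optimal matchings on these subgraphs, and every augmenting path of length at most $2L+1$ in the symmetric difference is contained in a single copy. Hence the algorithm's decisions on copy $i$ depend only on the edges that belong to $\sigma^{(i)}$, and the trajectory on each copy is identical to the one analyzed in Lemma~\ref{lem:lb_LGREEDY}. Consequently
\[
\textrm{ALG}(\sigma_N) = N(L+k-3), \qquad \textrm{OPT}(\sigma_N) \ge N\bigl(3\lfloor (L-1)/2\rfloor + k-2\bigr).
\]

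Next, suppose for contradiction that $L$-\textsc{Greedy} is $c$-competitive in the asymptotic sense, i.e.\ there exists a constant $d\ge 0$ with $\textrm{ALG}(\sigma)\ge \textrm{OPT}(\sigma)/c - d$ for every $\sigma$. Applying this to $\sigma_N$ and rearranging gives
\[
c \;\ge\; \frac{\textrm{OPT}(\sigma_N)}{\textrm{ALG}(\sigma_N)+d} \;\ge\; \frac{N\bigl(3\lfloor (L-1)/2\rfloor + k-2\bigr)}{N(L+k-3)+d}.
\]
Letting $N\to\infty$ makes the additive constant $d$ vanish, yielding
\[
c \;\ge\; \frac{3\lfloor (L-1)/2\rfloor + k-2}{L+k-3},
\]
which is exactly the bound~\eqref{eq:lb_LGREEDY}. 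Plugging in $L=\lfloor\sqrt{k-1}\rfloor$ and performing the same asymptotic estimate as at the end of the proof of Lemma~\ref{lem:lb_LGREEDY} gives $c\ge 1+\Omega(1/\sqrt{k})$.

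The only nontrivial point is the independence claim used in the first step; everything else is a routine limit argument. I expect this to go through without difficulty because the copies are vertex-disjoint and $L$-\textsc{Greedy} is purely local (it applies bounded-length augmenting paths in the symmetric difference with a particular internal optimum, which itself decomposes over connected components). The main obstacle would be if the tie-breaking in maintaining the internal optimal matching could couple different copies; but since the copies share no vertices, any optimum of the union is a union of optima of the copies, so a canonical choice can be made component-wise, preserving the per-copy trajectory.
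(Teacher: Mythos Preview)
Your proposal is correct and follows essentially the same approach as the paper: both arguments take $N$ (the paper uses $p$) vertex-disjoint copies of the adversarial instance from Lemma~\ref{lem:lb_LGREEDY}, use that $L$-\textsc{Greedy} acts independently on disjoint components so that both $\textrm{ALG}$ and $\textrm{OPT}$ scale by $N$, and then let $N\to\infty$ to absorb the additive constant. Your version is in fact a bit more explicit than the paper's about why the per-copy trajectories are independent (the paper simply asserts $p\cdot L\textsc{-Greedy}(\sigma') = L\textsc{-Greedy}(\sigma)$ without discussing tie-breaking), but the underlying idea is identical.
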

\begin{proof}
    Assume that the asymptotic competitive ratio of $L$-\textsc{Greedy} is $R$ for $R$ being strictly smaller than expression~\eqref{eq:lb_LGREEDY}.  Then by definition there exists a constant $d$ such that
    \[
        |L\textsc{-Greedy}(\sigma)| \geq |\textrm{OPT}(\sigma)| / R - d
    \]
    for all request sequences $\sigma$.  Let $\sigma'$ be the adversarial construction defined in the proof of Lemma~\ref{lem:lb_LGREEDY}. For an arbitrary positive integer $p$ let $\sigma$ be the result of repeating each edge of $\sigma'$ $p$ times, such that the resulting graph consists of $p$ disjoint copies of the graph produced by $\sigma'$.  By the above assumption we have
    \begin{align*}
      |L\textsc{-Greedy}(\sigma)| &\geq |\textrm{OPT}(\sigma)| / R - d   &\equiv
      \\
      p\cdot |L\textsc{-Greedy}(\sigma')| &\geq p\cdot |\textrm{OPT}(\sigma')| / R - d
      &\equiv
      \\
      |L\textsc{-Greedy}(\sigma')| &\geq |\textrm{OPT}(\sigma')| / R - d/p.
    \end{align*}
    Since $d/p$ can be arbitrarily close to $0$, this would mean that $L$-\textsc{Greedy} is $R$-competitive, a contradiction.
\end{proof}

\subsection{Lower bound on the competitive ratio of deterministic algorithms}
\label{sec:(k-1)/k}

Boyar \etal \cite{BoyarFavrholdt:17:Relaxing-the-Irrevocability} show that the deterministic competitive ratio of the problem is $2$ for $k=1$ and $3/2$ for $k=2$. We complete this picture by showing a lower bound of $1+\frac{1}{k-1}$ for all $k\geq 3$. Note that the lower bound is tight for $k=3$, as the algorithm \textsc{Greedy}, which works by assuming that $k$ is only $2$, has competitive ratio $3/2$.

\begin{theorem}
    The deterministic competitive ratio of the online matching problem with $k$ edge-recourse is at least $1+\frac{1}{k-1}$ for all $k \ge 3$.
\label{thm:lower.bound.deterministic}
\end{theorem}

\begin{proof}
We consider three cases, namely the cases $k=3$, $k$ is even and at least 4, and finally $k$ is odd and at least 5.
For each case we present an appropriate adversarial argument. 

\paragraph{Case $k=3$.}
Suppose, by way of contradiction, some algorithm claims a
competitive ratio strictly smaller than $(3n+2)/(2n+2)$ for some arbitrary
$n\geq 1$. The adversary releases a single edge, creating an augmenting path
of length $1$.  Then the algorithm applies the augmenting path, which the
adversary extends by appending one edge on each side, creating an augmenting
path of type 0,1,0, as shown in Figure~\ref{fig:lower.bound}(a).  Since the
current competitive ratio is $2$, the algorithm needs to apply this path, which the
adversary again extends by appending an edge on each side, creating an
augmenting path of type 0,1,2,1,0, as shown in
Figure~\ref{fig:lower.bound}(b). Since the current competitive ratio is $3/2$, the
algorithm applies this path. In response the adversary appends an edge at each
endpoint of the type 3 edge, and at each endpoint of one of the type 1 edges,
as shown in Figure~\ref{fig:lower.bound}(c).  The resulting graph has a
blocked augmenting path of type 0,3,0, and an augmenting path of type 0,1,0,
as shown in Figure~\ref{fig:lower.bound}(c). The algorithm needs to apply the
latter one as the competitive ratio is currently $5/3 > 3/2$.

\begin{figure}[ht]
\begin{tikzpicture}[very thick]
\draw[rounded corners,fill=gray!20,draw=white] (10.7,-0.3) rectangle (13.3,1.8);
\draw (0,2) node {(a)};
\draw (4,2) node {(b)};
\draw (8.3,2) node {(c)};
\foreach \x/\y/\n in {0/1/a, 1/1/b, 2/1/c, 0/0/d, 4/1/e, 4/0/f, 5/1/g, 5/0/h, 6/1/i, 7/1/j, 9/1/k, 9/0/l, 9/2/n, 10/1/p, 10/2/o, 10/0/m, 11/1/q, 11/0/r, 12/1/s, 13/1/t}
        \node[sommet] (\n) at (\x,\y) {};
\draw[alg]  (a) -- node[auto] {1} (b);
\draw[opt]  (a) -- node[auto] {0} (d);
\draw[opt]  (b) -- node[auto] {0} (c);

\draw[opt]  (f) -- node[auto] {0} (h);
\draw[alg]  (f) -- node[auto] {1} (e);
\draw[opt]  (e) -- node[auto] {2} (g);
\draw[alg]  (g) -- node[auto] {1} (i);
\draw[opt]  (i) -- node[auto] {0} (j);

\draw[alg]  (l) -- node[below] {1} (m);
\draw[opt] (11.1,-0.07) -- (11.9,-0.07);
\draw[none] (l) -- node[auto] {2} (k);
\draw[opt]  (k) -- node[auto] {0} (n);
\draw[alg]  (k) -- node[below] {3} (p);
\draw[opt]  (p) -- node[auto] {0} (o);
\draw[none] (p) -- node[auto] {2} (q);
\draw[opt]  (r) -- node[right] {0} (q);
\draw[alg]  (q) -- node[auto] {1} (s);
\draw[opt]  (s) -- node[auto] {0} (t);
\end{tikzpicture}
\caption{Lower bound construction on the competitive ratio for the case $k=3$.}
 \label{fig:lower.bound}
\end{figure}

At this point, the adversary repeats this construction $n-1$ times, by identifying the
shaded part of Figure~\ref{fig:lower.bound}(c) as the graph of
Figure~\ref{fig:lower.bound}(a), and reapplying the above construction.
The final graph consists of $n$ blocked augmenting paths of type 0,3,0 and $n+2$ edges of type $1$ that belong both to the
optimal and the algorithm's matchings. Hence, the competitive ratio is
\(
        ({3n+2})/({2n+2}),
\)
which contradicts the claimed ratio and shows a lower bound on the competitive ratio of $3/2$.

\paragraph{Case $k$ is even and at least 4.} Fix an algorithm that claims
a competitive ratio strictly smaller than $k/(k-1)$.  The adversary releases a single
edge, creating an augmenting path of length $1$.  Whenever the algorithm
applies the augmenting path\footnote{
We can assume, without loss of generality, that this is the only viable choice for the online algorithm. 
This is because the only way an algorithm can transform a given matching $M_1$ to a matching $M_2$ is
via a sequence which can only consist of the following: i) augmenting paths; ii) alternating cycles; and iii)
alternating, or even ``decreasing'' paths (i.e., paths such that if the algorithm applies them, then 
the cardinality of the matching remains the same, or decreases, respectively). 
This is a well-known result from matching theory. In principle an online algorithm, say $A$, could apply paths and cycles in 
cases (ii) and (iii), but such an algorithm can be converted to another algorithm $A'$ which is at least as good as $A$
in terms of matching size, and which maintains edges of smaller types than $A$.
}, the adversary extends it by appending one edge on
each end, eventually creating an alternating path of type $1,2,\dots,k-1, k,
k-1, \dots, 2,1$.  Then the adversary appends an edge to each endpoint of the
type $k-1$ edges. The resulting graph has two augmenting paths of type
$0,k-1,0$, see Figure~\ref{fig:LB}(a).  The algorithm needs to apply them as
the competitive ratio is currently $(k+2)/k$, which is strictly greater than $k/(k-1)$ if
$k\ge 4$. Each augmentation is responded, by the adversary, with an extension
of the path resulting in the  configuration depicted in Figure~\ref{fig:LB}(b)
of ratio $(k+4)/(k+2) \geq k/(k-1)$ where all augmenting paths are blocked by type $k$
edges. Hence the competitive ratio of the algorithm is not strictly smaller
than $k/(k-1)$.

\begin{figure}[h!]
\begin{tikzpicture}[scale=1,very thick,rotate=90]
\foreach \x/\y/\n in {0/2/a, 1/2/b, 2/2/c, 3/2/d, 4/2/e, 5/2/f, 6/2/g, 7/2/h, 8/2/i, 9/2/j, 10/2/k, 11/2/l, 12/2/m, 13/2/n,
                                                         5/1/o, 6/1/p, 7/1/q, 8/1/r}
        \node[sommet] (\n) at (\x,\y) {};
\foreach \u/\v/\col/\type in {a/b/alg/1,
                              b/c/none/2,
                              c/d/alg/3,
                              e/f/none/k-2,
                              f/g/alg/k-1,
                              g/h/none/k,
                              h/i/alg/k-1,
                              i/j/none/k-2,
                              k/l/alg/3,
                              l/m/none/2,
                              m/n/alg/1,
                              f/o/opt/0,
                              g/p/opt/0,
                              h/q/opt/0,
                              i/r/opt/0
                             }
    \draw[\col] (\u) -- node[auto] {\type} (\v);
\draw[Snake] (3,2) -- (4,2);
\draw[Snake] (9,2) -- (10,2);
\draw[opt]   (0.1, 1.95) -- (0.9, 1.95);
\draw[opt]   (2.1, 1.95) -- (2.9, 1.95);
\draw[opt]   (10.1, 1.95) -- (10.9, 1.95);
\draw[opt]   (12.1, 1.95) -- (12.9, 1.95);
\draw (-0.5,2.5) node {(a)};
\end{tikzpicture}
\hspace{1em}
\begin{tikzpicture}[scale=1,very thick,rotate=90]
\foreach \x/\y/\n in {0/2/a, 1/2/b, 2/2/c, 3/2/d, 4/2/e, 5/2/f, 6/2/g, 7/2/h, 8/2/i, 9/2/j, 10/2/k, 11/2/l, 12/2/m, 13/2/n,
                                                         5/1/o, 6/1/p, 7/1/q, 8/1/r,
                                                         5/0/s, 6/0/t, 7/0/u, 8/0/v}
        \node[sommet] (\n) at (\x,\y) {};
\foreach \u/\v/\col/\type in {a/b/alg/1,
                              b/c/none/2,
                              c/d/alg/3,
                              e/f/none/k-2,
                              f/g/opt/k,
                              g/h/none/k,
                              h/i/opt/k,
                              i/j/none/k-2,
                              k/l/alg/3,
                              l/m/none/2,
                              m/n/alg/1,
                              f/o/alg/1,
                              g/p/alg/1,
                              h/q/alg/1,
                              i/r/alg/1,
                              o/s/opt/0,
                              p/t/opt/0,
                              q/u/opt/0,
                              r/v/opt/0
                             }
    \draw[\col] (\u) -- node[auto] {\type} (\v);
\draw[Snake] (3,2) -- (4,2);
\draw[Snake] (9,2) -- (10,2);
\draw[opt]   (0.1, 1.95) -- (0.9, 1.95);
\draw[opt]   (2.1, 1.95) -- (2.9, 1.95);
\draw[opt]   (10.1, 1.95) -- (10.9, 1.95);
\draw[opt]   (12.1, 1.95) -- (12.9, 1.95);
\draw (-0.5,2.5) node {(b)};
\end{tikzpicture}
\hspace{1em}
\begin{tikzpicture}[scale=0.9,very thick,rotate=90]
\foreach \x/\y/\n in {-1/2/aa, 0/2/a, 1/2/b, 2/2/c, 3/2/d, 4/2/e, 5/2/f, 6/2/g, 7/2/h, 8/2/i, 9/2/j, 10/2/k, 11/2/l, 12/2/m, 13/2/n,
                                                                                                                        14/2/nn,
                                                                  5/1/o, 6/1/p, 7/1/q, 8/1/r}
        \node[sommet] (\n) at (\x,\y) {};
\foreach \u/\v/\col/\type in {a/b/alg/1,
                              b/c/opt/2,
                              c/d/alg/3,
                              e/f/alg/k-2,
                              f/g/none/k-1,
                              g/h/alg/k,
                              h/i/none/k-1,
                              i/j/alg/k-2,
                              k/l/alg/3,
                              l/m/opt/2,
                              m/n/alg/1,
                              g/p/opt/0,
                              h/q/opt/0,
                              aa/a/opt/0,
                              n/nn/opt/0,
                              f/o/opt/0,
                              i/r/opt/0
                             }
    \draw[\col] (\u) -- node[auto] {\type} (\v);
\draw[Snake] (3,2) -- (4,2);
\draw[Snake] (9,2) -- (10,2);
\draw (-1,2.7) node {(c)};
\end{tikzpicture}
\hspace{1em}
\begin{tikzpicture}[scale=0.9,rotate=90,very thick]
\foreach \x/\y/\n in {-1/0/a0, -1/1/a1, -1/2/aa, 0/2/a, 1/2/b, 2/2/c, 3/2/d, 4/2/e, 5/2/f,
                      6/2/g, 7/2/h, 8/2/i, 9/2/j, 10/2/k, 11/2/l, 12/2/m, 13/2/n, 14/2/nn,
                      14/1/n1, 14/0/n0,
                      5/0/o0, 8/0/r0,
                      5/-1/o1, 8/-1/r1,
                                                                  5/1/o, 6/1/p, 7/1/q, 8/1/r}
        \node[sommet] (\n) at (\x,\y) {};
\foreach \u/\v/\col/\type in {a/b/alg/3,
                              b/c/opt/4,
                              c/d/alg/5,
                              e/f/alg/k,
                              f/g/none/k-1,
                              g/h/alg/k,
                              h/i/none/k-1,
                              i/j/alg/k,
                              k/l/alg/5,
                              l/m/opt/4,
                              m/n/alg/3,
                              g/p/opt/0,
                              h/q/opt/0,
                              aa/a/opt/2,
                              n/nn/opt/2,
                              f/o/opt/2,
                              i/r/opt/2,
                              aa/a1/alg/1,
                              a1/a0/opt/0,
                              nn/n1/alg/1,
                              n1/n0/opt/0,
                              o/o0/alg/1,
                              r/r0/alg/1,
                              o0/o1/opt/0,
                              r0/r1/opt/0
                             }
    \draw[\col] (\u) -- node[auto] {\type} (\v);
\draw[Snake] (3,2) -- (4,2);
\draw[Snake] (9,2) -- (10,2);
\draw (-1,2.7) node {(d)};
\end{tikzpicture}

\caption{Lower bound constructions for the deterministic competitive ratio. Solid/blue edges depict the algorithm's matching, dashed/red edges depict the optimal matching, dotted/black edges belong to none of the matchings and wiggled lines represent parts of the graph that are contracted for readability.}
\label{fig:LB}
\end{figure}

\paragraph{Case $k$ is odd and at least 5.} Fix an algorithm that claims a
competitive ratio strictly smaller than $k/(k-1)$.  The adversary proceeds as in the
previous case, until the graph consists of a path of type $1,2,\dots,k-1, k,
k-1, \dots, 2,1$. This time, the adversary appends one edge to each endpoint of
the type $k-1$ edges, but also appends one edge at each endpoint of the path.
As a result, there are two augmenting paths of type $0,1,2,\dots,k-2,0$ and a
single blocked augmenting path of type $0,k,0$, see Figure~\ref{fig:LB}(c).

The algorithm needs to apply an augmenting path as the competitive ratio is currently
$(k+3)/k$, which is strictly greater than $k/(k-1)$ for $k\ge 5$. The
adversary responds each augmentation of a path by appending an edge on both
ends of this path. At this moment, the ratio decreased slightly, but still
exceeds the claimed ratio, forcing the algorithm to continue augmenting.
Eventually this leads to a configuration formed by two blocked augmenting
paths of  type $0,1,2,\dots,k, 2, 1, 0$ and a blocked augmenting path of type
$0, k ,0$, see Figure~\ref{fig:LB}(d).  The competitive ratio of this configuration is
$(k+7)/(k+4)$ which exceeds $k/(k-1)$ for $k \ge 5$.
\end{proof}

\subsection{Comparing the algorithms $L$-\textsc{Greedy} and AMP}

We have analyzed two deterministic online algorithms: the algorithm AMP, which has  competitive ratio $1 +
O \left( \log k / k \right)$, and the algorithm $L$-\textsc{Greedy}, which has competitive ratio
$1+\Theta(1/\sqrt k)$. Since the analysis of $L$-\textsc{Greedy} is tight, it follows that AMP is asymptotically
(i.e., for large $k$) superior to $L$-\textsc{Greedy}. However, for small values of $k$, namely $k \leq 20$,
we observe that $L$-\textsc{Greedy} performs better, in comparison to the performance bound we have shown for AMP. These findings are
summarized in Table~\ref{table:upper_bounds} and Figure~\ref{fig:compare_ratios} (Section~\ref{subsec:contribution}).

\begin{table}[htb!]
\small
\begin{center}
\begin{tabular}{ |l|l|l|l|l| }
  \hline
k &LB (arr.) & LB (arr./dep.) & $L$-\textsc{Greedy} & AMP
\\
  \hline
 4 & 1.333333 & 1.428571 & 1.5      & 2.598076 \\
 6 & 1.2      & 1.263158 & 1.466667 & 1.869186 \\
 8 & 1.142857 & 1.179487 & 1.428571 & 1.613602 \\
10 & 1.111111 & 1.134328 & 1.333333 & 1.480583 \\
12 & 1.090909 & 1.106796 & 1.318182 & 1.398080 \\
14 & 1.076923 & 1.088435 & 1.307692 & 1.341500 \\
16 & 1.066666 & 1.075377 & 1.300000 & 1.300080 \\
18 & 1.058823 & 1.065637 & 1.247059 & 1.268330 \\
20 & 1.052631 & 1.058104 & 1.242105 & 1.243150 \\
22 & 1.047619 & 1.052109 & 1.238095  & 1.222640 \\
  \hline
\end{tabular}
\end{center}
\caption{Summary of lower bounds (LB) and upper bounds on the competitive ratio for the problem, for all even $k$ with $4\leq k\leq 22$.
The lower bounds for the (limited) arrival/departure model are discussed in Section~\ref{sec:departure}. The analysis of $L$-\textsc{Greedy} and AMP carry through to the (limited) arrival/departure model. For $k\geq22$, the upper bound of AMP
is superior to the upper bound of $L$-\textsc{Greedy}. }
\label{table:upper_bounds}
\end{table}
\normalsize

\section{Online matching in the edge arrival/departure model}
\label{sec:departure}

In this section we consider the online matching problem with $k$ edge-recourse
in the setting in which edges may arrive but also {\em depart} online. More
precisely, a request sequence for this problem is of the form
$(p_i,e_i)_{i\geq 1}$, namely the $i$-th request consists of an edge $e_i$ and
its {\em mode} $p_i \in \{\tt{arrive}, \tt{depart}\}$. If $p_i =\tt{arrive}$
then the edge $e_i$ becomes available; this corresponds to the arrival setting
studied in Section~\ref{sec:arrival}. If  $p_i =\tt{depart}$, then $e_i$ is
removed from the graph, and can be used by neither the online algorithm or the
optimal offline  algorithm. We emphasize that in this model, an edge of the
form $(u,v)$ may arrive and depart several times in the course of serving a
request sequence, but every time it arrives it is considered as a ``fresh''
edge. As a consequence, upon each arrival, such an edge is assigned type 0.
Moreover, a departing edge ceases to exist in any matchings.

As explained in the Introduction, we will further distinguish between two
models. In the \emph{limited departure model}, an edge cannot depart  while it
is being used in the matching of the online algorithm,  whereas in the
stronger \emph{full departure model} any edge can depart.

It turns out that the full departure model is quite restrictive. This is
because it is possible for the adversary to force an online algorithm to
augment some augmenting path and then to remove one of the edges in its
matching.  Eventually the algorithm can end up with blocked edges (type $k$),
without having the chance to augment its matching.
This intuition is formalized in the following lemma.
\begin{lemma}
    The competitive ratio of online matching with $k$ edge-recourse in the full departure model is $2$.
    \label{lemma:full}
\end{lemma}

\begin{proof}     
To show an upper bound of 2, consider an algorithm which adds to its matching
any edge whose endpoints are unmatched. The edge types are either $0$ or $1$,
and therefore the algorithm is not sensitive to the given edge budget.  The
matching produced by the algorithm is (inclusion wise) a maximal matching, and
it is well known that its size is at least $1/2$ the size of the maximum
matching.

    To show a lower bound of 2, consider a graph consisting of vertices 1, 2, 3, 4,
    with the edge $(2,3)$ arriving at the beginning.  Then the edges
    $(1,2),(3,4)$ arrive and depart repeatedly, alternating between two
    configurations. When the graph consists of the single edge $(2,3)$, the
    algorithm needs to include it in its matching. When the graph consists of
    the path $(1,2,3,4)$, the algorithm needs to apply this augmenting path if
    it claims to have a competitive ratio strictly lower than $2$.  As a result, the type
    of the edge $(2,3)$ keeps increasing, and when it reaches $k$,
    the algorithm cannot augment the matching anymore.  Thus, for
    even $k$, the algorithm has a matching of size $0$, while the optimal
    matching consists of the edge $(2,3)$. Similarly, for odd $k$, the
    algorithm has a matching of size $1$, while the optimal matching has size
    $2$.  Hence, no algorithm can achieve a competitive ratio strictly lower
    than~$2$.
\end{proof}

Since the full departure model is very restrictive for the algorithm, as shown in Lemma~\ref{lemma:full},
we will concentrate on the limited departure model, as defined in the introduction.  For this model, we observe that the algorithms $L$-\textsc{Greedy} and AMP have the same performance guarantee as in the edge arrival model. This is because the analysis of $L$-\textsc{Greedy} uses weights on vertices which are not affected by edge departures, and the analysis of AMP is based on an upper bound over the number of type $k$ edges, which still holds under edge departures.  We thus focus on obtaining stronger lower bounds in this model (also included in
Table~\ref{table:upper_bounds}). We begin by observing that the bound of $3/2$ of the competitive ratio in the edge arrival model for $k\in\{2,3\}$ still holds for the limited departure model, in which the adversary is stronger.
Hence, the smallest interesting value for $k$ in this model is $k=4$, for which we provide the following specific lower bound. The proof will provide some intuition about the adversarial argument for general $k$, which is shown in 
Theorem~\ref{thm:k.geq.6}.
\begin{theorem}
    The competitive ratio of online matching with $k$ edge-recourse in the \emph{limited departure model} is at least $10/7$ for $k=4$.
\label{thm:departure.k_equal_4}
\end{theorem}
\begin{proof}
We will prove the theorem by applying a game between the online algorithm and the adversary. 
In particular, the adversary will enforce arrivals and departures of edges in such a way that, at every moment in time, 
the symmetric difference between the matching produced by the algorithm 
and the optimal matching consists only of augmenting paths.
In particular, this symmetric difference will have no alternating cycles or alternating paths of even length.

Any such augmenting path can be represented as a {\em string} of integers in $\{0,1,\ldots,k\}$, which is precisely the type of this path.
Note that for an augmenting path, this string begins and ends with 0. We can thus think of the above-defined symmetric difference as a collection
of strings, which in turn allows us to define the game between the algorithm and the adversary over this collection of strings as opposed to defining it
over the actual graph.

Whenever the algorithm applies an augmenting path, this translates into the increment of all edge types of the corresponding string, for example 
the string $01210$ becomes $12321$.  The adversary will respond to this augmentation by a combination of the following three possible types of {\em operations}. 
\begin{itemize}
\item 
The adversary may {\em append} $0$'s to both ends of a string, for example $101 \rightarrow 01010$. 
To do so, the adversary releases two new edges (of type 0). Each edge is incident with only one endpoint of the path described by the string, and is not
incident with any other edge in the current graph.

\item
The adversary may {\em split} the string into smaller strings, for example $12321\rightarrow \{123,1\}$ or $12321\rightarrow \{1,3,1\}$. 
This can be done via the departure of certain edges which are not in the algorithm's matching (of even type). 
For instance, the operation $12321\rightarrow \{123,1\}$ can be done by having one edge of type 2 depart, and the operation
$12321\rightarrow \{1,3,1\}$ can be done by having  both edges of type 2 depart.

\item 
The adversary may {\em merge} certain strings, for example $\{1,1\} \rightarrow 101$.  
This can be done via the arrival of a new edge (of type 0).
\end{itemize}

\begin{figure}[ht]
\centerline{\includegraphics[width=\textwidth]{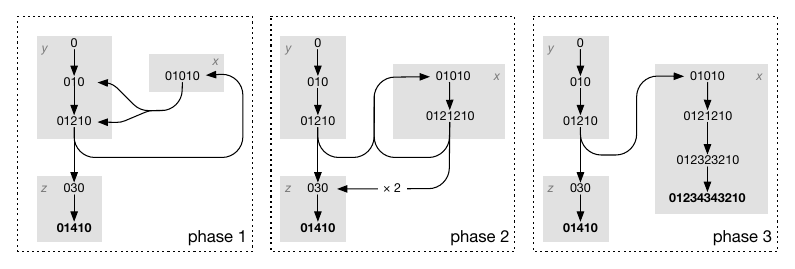}}
\caption{An illustration of the three phases in the game between the algorithm and the adversary, for the proof of 
Theorem~\ref{thm:departure.k_equal_4}.  Blocked strings are depicted in bold face. The arcs illustrate the actions of the adversary, 
after an augmentation by the algorithm. For example, if the algorithm augments the string $01010$ in phase 1, then  
the adversary replaces the resulting string by the strings $010$ and $01210$, whereas in phases 2 and 3 it replaces it with the string $0121210$.  
The numbers $x,y,z$ count the number of strings which belong to the corresponding shown boxes.}
\label{fig:lb_arr_dep_k4}
\end{figure}

\paragraph{The game between the algorithm and the adversary} The main idea
behind the adversarial construction is as follows. We suppose that the online
algorithm has competitive ratio at most  $(10 - \epsilon)/7$ for arbitrarily
small $\epsilon > 0$. We will then show that the adversary can eventually
force the algorithm  to a competitive ratio at least $10/7$, thus leading to a
contradiction.

The game begins with the adversary presenting the string $0$, which the
algorithm has to augment, resulting in a single string $010$.  From this point
onwards, the game proceeds in three {\em phases}, which are depicted in
Figure~\ref{fig:lb_arr_dep_k4}.  In each phase, a sequence of
algorithm/adversary {\em actions} takes place. Each action is of the following
form: The algorithm chooses some string $s$ to augment, which results in a
string $s'$. Then the adversary will perform a sequence of the  above defined
operations on $s'$, which will result in either a single new string (say
$\overline{s}$), or to several new strings, say $\overline{s}_1,
\overline{s}_2, \ldots \overline{s}_k$ (in our construction, it will be that
$k \in [1,3]$). This adversarial action is depicted by means of an arrow from
$s$ to each of the $\overline{s}_1, \overline{s}_2, \ldots \overline{s}_k$ in
Figure~\ref{fig:lb_arr_dep_k4}.

As an example, in phase 1, if the algorithm augments string $s=010$ (thus
obtaining string $s'=121$), the adversary appends two zeros at both ends of
$s'$, which results in the string $\overline{s}_1=01210$. If the algorithm
augments string $s=01210$, thus obtaining string $s'=12321$, then the
adversary first splits $s'$ to two strings $1,3$ and $1$, which he then
transforms into the strings $\overline{s}_1=01010$ and $\overline{s}_2=030$,
by merging them and appending zeros. If the algorithm augments string
$s=030$, thus  obtaining string $s'=141$, then  the adversary appends two
zeros at both ends of $s'$, which results in the string
$\overline{s}_1=01410$. Last, if the algorithm augments string  $s=01010$
(thus obtaining $s'=12121$) then the adversary first splits $s'$ to two
strings $1$ and $121$,  then appends two zeros to the end of each string.
This results in two strings  $\overline{s}_1=010$ and $\overline{s}_2=01210$.
The above are all possible actions that can occur in phase 1.  Actions for
phases 2 and 3 are similar and defined by the graphs in 
Figure~\ref{fig:lb_arr_dep_k4}. 

We also need to explain how the game {\em transitions} between phases; i.e.,
under which conditions the game moves from phase $i$ to phase $i+1$. To this
end, we define some variables which count the numbers of some specific
strings. More precisely:
\begin{itemize}
\item $x$ denotes the number of strings  $01010$, $0121210$, $012323210$ or $01234343210$; such strings have local ratio $3/2$, $4/3$, $5/4$ and $6/5$, respectively. 
\item $y$ denotes the number of strings $0$, $010$ or $01210$; such strings have local ratio $\infty$, $2$ and $3/2$, respectively.
\item $z$ denotes the number of strings $030$ or $01410$; such strings have local ratio $2$ and $3/2$, respectively.
\end{itemize}
In Figure~\ref{fig:lb_arr_dep_k4} we use ``boxes'' to illustrate this grouping of strings. 

In particular, we will call strings $030$ or $01410$ \emph{bad strings}.  
This is motivated by the observation that $01410$ is {\em blocked} and has
large local ratio equal to $3/2$; note that the algorithm cannot augment  such
a string. Moreover, the string $030$ has local ratio 2, and immediately after
an augmentation it becomes $01410$. 

The goal of the adversary is to reach a configuration with only blocked
strings $01410$ and $01234343210$ with a maximum proportion of bad strings
$01410$, since this maximizes the competitive ratio.  It is relatively easy
for the adversary to generate bad strings, but this comes at the expense of
generating strings $01010$.  The adversary's goal is to minimize the
proportion of these strings, and this is done through three different phases. 
 At a high level, the objective of  phase 1 is to create a large number of bad
strings. This is is also the objective of phase 2, but with a more
\emph{efficient} generation of bad strings, in the sense that fewer $01010$
strings are generated per bad string. The objective of phase 3 is simply to
bring the game in a configuration with only blocked strings.

The game starts with the adversary entering phase 1 and generating the single
string $0$. 
In phase 1, immediately after each action of the adversary, the algorithm has
competitive ratio at least $3/2$, meaning that it is forced to augment strings, since $3/2 > (10 - \epsilon)/7$, which is the competitive ratio claimed by the algorithm. This is because in this phase all strings
have local ratio at least $3/2$. 
Throughout phase 1 we have the invariant
\begin{equation}
  2x + y = z + 1, \tag{Inv 1} \label{inv:inv1}
\end{equation}
which can be  verified by inspecting each possible action of phase 1. For example the augmentation of $01010$ decrements $x$ and increments $y$ by $2$.  Eventually the inequality $7z+3 > 2/\epsilon$ will come to hold, simply because after at most $x+1$ augmentations, the counter $z$ increases strictly. At that moment, the adversary moves to phase 2.

Throughout phase 2 we have the invariants
\begin{equation}
7z+3 > 2/\epsilon \tag{Inv 2.1} \label{inv:inv21}
\end{equation}
and
\begin{equation}
2x + y \leq z + 1 \tag{Inv 2.2} \label{inv:inv22}.
\end{equation}

\eqref{inv:inv21} holds because the left-hand side will not decrease
throughout the phase. To show~\eqref{inv:inv22}, we first observe that by
invariant~\eqref{inv:inv1} phase 2 starts with equality, and the actions of
phase 2 preserve the inequality~\eqref{inv:inv22}, which can be easily shown
by inspecting each possible action during phase 2. 

Phase 2 continues for as long as $z<8(x+y)$, and phase 3 begins at the point
in which $z \geq 8(x+y)$. This condition will eventually be reached, because
the quantity $x+y$ is invariant during phase 2, whereas any sequence of at
least $x+1$ actions increases $z$ by at least 1. 

We will now argue that in phase 2, right after each action of the adversary,
the competitive ratio of the algorithm is strictly greater than
$(10-\epsilon)/7$, which implies that the algorithm must, in turn, respond
with an augmentation to every action of the adversary in phase 2, since we
assumed that the algorithm is $(10-\epsilon)/7$-competitive. To this end, we
observe that at each point in phase 2, the  algorithm maintains certain types
of strings whose local ratio we lower bounded above. In particular, there are
$y+z$ strings of local ratio at least $3/2$ and $x$ strings of local ratio at least
$4/3$. Therefore, a
lower bound to the competitive ratio during phase 1, 
can be stated as follows, where we will make use of
the property 
\begin{align*}
f(x,y) := \frac{ax+cy}{bx+dy} \quad  & \textrm{is decreasing on $x$ and increasing on $y$ if} \\
& a,b,c,d > 0 \ \textrm{and}  \ \frac{a}{b} \leq \frac{c}{d}.
\tag{P1}
\end{align*}
We have
\begin{align*}
      \frac{4x + 3y + 3z }
            {3x + 2y + 2z}
    &= 
      \frac{8x+6y+ 6z}
            {6x + 4y +4z} \\
    &=\frac{4(2x+y) + 2y + 6z}{3(2x+y) + y + 4z}\\
    &\geq \frac{4(z+1) + 2y + 6z}{3(z+1)+y+4z} \tag{from~P1 and~\ref{inv:inv22}}\\
    &=\frac{2y+10z+4}{y+7z+3}\\
    &\geq \frac{10z+4}{7z+3} \tag{from~P1}\\ 
    &=\frac{10}{7} - \frac{\frac27}{7z+3}\\
    &> \frac{10}{7} - \frac{\frac27}{\frac{2}\epsilon}  \tag{from~\ref{inv:inv1}} \\
    & = \frac{10-\epsilon}{7}.
\end{align*}

Recall that when the condition $z \geq 8(x+y)$ becomes satisfied, the game moves to the final phase, namely phase 3. 
Moreover, the condition $z \geq 8(x+y)$ holds throughout phase 3, since $x+y$  is invariant and $z$ can only increase in this phase. 
We also obtain that 
\[
y+z\geq y + 8(x+y) \geq 8x,
\]
which will be useful.
Similar to the previous argument, we can lower bound the competitive ratio after each action of the adversary by
\begin{align*}
    \frac{3(y+z) + 6x }{2(y+z) + 5x} &\geq \frac{3\cdot (8x)+6x}{2\cdot(8x)+5x} \tag{From $y+z\geq 8x$ and~P1} \\
    &=\frac{10}{7}.
\end{align*}
Therefore, the algorithm must augment after each action of the adversary, and eventually must find itself in a
configuration which consists only of blocked strings (either $01410$ or $01234343210$). At this configuration, the 
algorithm cannot do any further augmentations, hence its competitive ratio is at least $10/7$, a contradiction. 
\end{proof}

We can generalize the ideas in the proof of Theorem~\ref{thm:departure.k_equal_4} so as to obtain a non-trivial lower bound for general even $k \geq 4$ in Theorem~\ref{thm:k.geq.6}. Note that  since $ \frac{k^2-3k+6}{k^2-4k+7}> 1+\frac{1}{k-1}$ for all $k \geq 4$, Theorem~\ref{thm:k.geq.6} shows a stronger lower bound for even $k$
than Theorem~\ref{thm:lower.bound.deterministic} under the limited departure model.

\begin{theorem}
    The competitive ratio online matching with $k$ edge-recourse in the limited departure model is at least
    $\frac{k^2-3k+6}{k^2-4k+7}$, for all even $k\geq 4$.
\label{thm:k.geq.6}
\end{theorem}

\begin{proof}
First, we observe that for $k = 4$, the expression $\frac{k^2-3k+6}{k^2-4k+7}$ is equal to $10/7$, which is precisely the value obtained in Theorem~\ref{thm:departure.k_equal_4}. Therefore, it suffices to prove the result for even $k \geq 6$.

The proof generalizes the ideas behind the proof of Theorem~\ref{thm:departure.k_equal_4}, and in particular the concept of a game between the online algorithm and the adversary. Again, we suppose that the online algorithm has competitive ratio at most $\frac{k^2-3k+6 - \epsilon}{k^2-4k+7}$ for arbitrarily small 
$\epsilon > 0$. We will then show that the adversary can eventually force the algorithm to a competitive ratio at least $\frac{k^2-3k+6}{k^2-4k+7}$, thus leading to a contradiction.

The game begins with the adversary presenting the string $0$, which the algorithm has to
augment, resulting in a single string $010$. From this point onwards the game proceeds in two phases, which are depicted in 
Figure~\ref{fig:lb_arr_dep_k_general}. In each phase, a sequence of algorithm/adversary actions takes place. Actions are defined to be consistent with Figure~\ref{fig:lb_arr_dep_k_general} for the two phases of the game. 
It is worth pointing out that the game for $k \geq 6$ consists of two phases, while in contrast, the game for $k=4$ (as described in the proof of Theorem~\ref{thm:departure.k_equal_4}) consists of three phases. The second phase for $k=4$ is necessary for the adversary to force the algorithm to keep augmenting strings.


\begin{figure}[ht]
\centerline{\includegraphics[width=12cm]{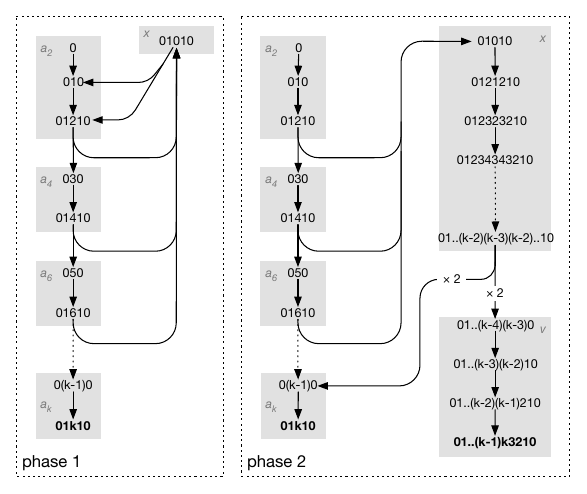}}
\caption{The lower bound construction in the arrival/departure model. Blocked strings are depicted in bold face. The arcs illustrate the adversarial strategy. For example if the algorithm augments the string $01\ldots (k-2)(k-3)(k-2)\ldots 10$ the adversary replaces the resulting string by two strings $0(k-1)0$ and two strings $01\ldots (k-4)(k-3)0$.  The numbers $a_2,a_4,a_6,\ldots,a_k,x,v$ count the number of strings which belong to the corresponding shown boxes.}
\label{fig:lb_arr_dep_k_general}
\end{figure}

We also need to explain under which conditions the game transitions from phase $1$ to phase $2$. To this end, we define again some variables which count the numbers of certain specific strings. More precisely:
\begin{itemize}
\item $x$ denotes the number of strings $0123\ldots i(i-1)i\ldots3210$ for all $1\leq i \leq k-2$; such strings have local ratio $\frac{i+2}{i+1}$. 
\item $a_i$, where $i$ is even in $[2,k]$, denotes the number of strings $0(i-1)0$ or $01i10$; such strings have local ratio $2$ and $3/2$, respectively.
\item $v$ denotes the number of strings $0123\ldots (k-3)0$, $0123\ldots (k-2)10$, $0123\ldots (k-1)210$ or $0123\ldots k 3210$; such strings have local ratio $\frac{k/2}{k/2-1}$, $\frac{k/2+1}{k/2}$, $\frac{k/2+2}{k/2+1}$ and $\frac{k/2+3}{k/2+2}$, respectively. 
\end{itemize}

In phase 1, immediately after each action of the adversary, the algorithm has competitive ratio at least $3/2$. This is because in this 
phase all strings have local ratio at least $3/2$. However, after each augmentation by the algorithm the competitive ratio can be 
much better, and possibly smaller than $\frac{k^2-3k+6}{k^2-4k+7}$. For this reason, the adversary will move eventually the game to phase 2.
In particular, phase 2 begins once the following condition is satisfied.
\[
\sum_{j=1}^{k/2}{(2j(k-1) - 3 k + 7) \cdot a_{2j}} > \frac{2(k-3)}{\epsilon}-(k-1).
\]
Note that this condition will be satisfied because the coefficient $(2j(k-1) - 3 k + 7)$ is non-negative for $j \geq 2$ and increasing in $j$ for $j\geq 1$, and whenever $a_{2j}$ decreases by one, $a_{2(j+1)}$ increases by one during the execution of phase 1. 
Intuitively, the objective in phase 1 is to create a large number of strings counted by $a_4, a_6, \ldots,a_k$,  whereas in phase 2 the objective is to force the algorithm in a configuration with only blocked strings.

Throughout phase 2, we have the invariants
\begin{align}
&\sum_{j=1}^{k/2}{(2j(k-1) - 3 k + 7) \cdot a_{2j}} > \frac{2(k-3)}{\epsilon}-(k-1), \tag{Inv 3.1} \label{eq:inv1}
\\ 
&2x  \leq 1 + \sum_{j = 1}^{k/2}{(2j-3)a_{2j}} - (k-2)v. \tag{Inv 3.2}\label{eq:inv2}
\end{align}
Invariant \eqref{eq:inv1} holds because the left-hand side will not decrease throughout the phase. To show Invariant~\eqref{eq:inv2}, we first observe that at the end of phase 1, it holds that 
\[
2x \leq 1 + \sum_{j = 1}^{k/2}{(2j-3)a_{2j}};
\]
this can be easily shown by induction on the number of actions during phase 1. Thus, at the beginning of phase 2, Invariant~\eqref{eq:inv2} holds, since $v=0$ at that point. Again, a simple inductive argument on the number of actions throughout phase 2 can show that the invariant is maintained.

We will argue that Invariants \eqref{eq:inv1} and \eqref{eq:inv2} imply that throughout phase 2, the competitive ratio is strictly larger than
$\frac{k^2-3k+6 - \epsilon}{k^2-4k+7}$, and hence throughout phase 2 the algorithm is forced to augment strings, until all strings are blocked.
This is because the competitive ratio in phase 2 can be lower bounded by
\begin{align*}
      & \phantom{\geq} \frac{(\frac{k}{2}+3)v  + 3\sum_{j = 1}^{k/2}{a_{2j}} + kx}
          {(\frac{k}{2}+2)v  + 2\sum_{j = 1}^{k/2}{a_{2j}} + (k-1)x}
       \notag
      \\
      &= \frac{(k+6)v +  6\sum_{j = 1}^{k/2}{a_{2j}} + 2kx}
          {(k+4)v +  4\sum_{j = 1}^{k/2}{a_{2j}} + 2(k-1)x}
       \notag
      \\
      &\geq \frac{(k+6)v +  6\sum_{j = 1}^{k/2}{a_{2j}} + k\left(1 + \sum_{j = 1}^{k/2}{(2j-3)a_{2j}} - (k-2)v \right) }
          {(k+4)v +  4\sum_{j = 1}^{k/2}{a_{2j}} + (k-1)\left(1 + \sum_{j = 1}^{k/2}{(2j-3)a_{2j}} - (k-2)v \right)} 
          \tag{From~P1 and \ref{eq:inv2}}
      \\
      &= \frac{(-k^2+3k+6)v + \sum_{j = 1}^{k/2}{\left(k(2j-3)+6\right)a_{2j}} + k} {(-k^2+4k+2)v + \sum_{j = 1}^{k/2}{\left((k-1)(2j-3)+4\right)a_{2j}} + k-1 }. \\
\end{align*}
To complete the proof, it remains to show that
\begin{equation}
\frac{(-k^2+3k+6)v + \sum_{j = 1}^{k/2}{\left(k(2j-3)+6\right)a_{2j}} + k} {(-k^2+4k+2)v + \sum_{j = 1}^{k/2}{\left((k-1)(2j-3)+4\right)a_{2j}} + k-1 }
      >  \frac{k^2-3k+6-\epsilon}
               {k^2-4k+7}.
\label{eq:final.lower.bound.departure}
\end{equation}
By a simple mathematical manipulation, for~\eqref{eq:final.lower.bound.departure} to hold, it suffices that
\begin{equation}      \label{expr:inv_general_k}
v \cdot C_v  + \sum_{j=1}^{k/2}{a_{2j}\cdot C_{2j}} > 2(k-3)-(k-1)\epsilon,
\end{equation}
where $C_v$ and $C_{2j}$ are defined as
\begin{align*}
C_v &= 3(k^2-7k+10)-(k^2-4k-2)\epsilon  \\
C_{2j} &= 2(k-3)(k-2j) + (2j(k-1) - 3 k + 7)\epsilon.
\end{align*}
We observe that $C_v$ and the first additive term in the expression of $C_{2j}$ (i.e. $2(k-3)(k-2j)$) are non-negative for sufficiently small $\epsilon$, $k\geq 6$ and $1\leq j\leq k/2$. Removing these terms from the left hand side of \eqref{expr:inv_general_k}, it suffices to show that
\begin{equation*}
\sum_{j=1}^{k/2}{(2j(k-1) - 3 k + 7)\epsilon \cdot a_{2j}} > 2(k-3)-(k-1)\epsilon.
\label{eq:same.as.invariant}
\end{equation*}
This inequality is precisely Invariant \eqref{eq:inv1}, which concludes the proof.
\end{proof}


\section{Conclusion}
In this paper we provided improved upper and lower bounds for online maximum matching with $k$ edge-recourse.
More specifically, we analyzed two online algorithms for the edge arrival model, namely AMP and $L$-\textsc{Greedy} which seem to be incomparable:
the former is asymptotically superior, in terms of $k$, but the latter has a better performance analysis for small $k$. It would be interesting
to analyze an algorithm that combines the ingredients of these two algorithms, namely, an algorithm that combines the doubling techniques with
augmenting only along short paths. The difficulty in the analysis of such an algorithm lies in that reasonable charging schemes tend to have ``local'' properties, wheres the doubling algorithm applies a ``global'' criterion which does not easily translate into some structural property that can be useful in analysis.

The problems we consider remain challenging even for $k$ as small as 4, and some gap between the upper and lower bounds remains. Bringing this gap
will probably require new ideas and techniques. To this end, it is worth pointing out that the amortization arguments we used in our analysis may have connections to LP-based algorithms (since the dual of the maximum matching problem is a weighted vertex minimization problem). Thus, it would be very interesting to use a duality-based approach, such as dual fitting, towards the design and analysis of improved algorithms.

\bibliographystyle{plain}
\bibliography{OnlineMatching_journal}

\end{document}